\newcommand{\pair}[1]{\left({#1}\right)}
\newcommand{\sqbra}[1]{\left[{#1}\right]}
\newcommand{\set}[1]{\left\{{#1}\right\}}
\newcommand{\ang}[1]{\left\langle{#1}\right\rangle}
\newcommand{\es}{\varnothing}
\newcommand{\pow}{\mathcal{P}}
\newcommand{\nat}{\mathbb{N}}
\newcommand{\real}{\mathbb{R}}
\newcommand{\IMP}{IMP}
\begin{document}
\title{Applying Abstract Argumentation Theory to Cooperative Game Theory}%\thanks{Supported by organization x.}}
%
%\titlerunning{Abbreviated paper title}
% If the paper title is too long for the running head, you can set
% an abbreviated paper title here
%
\author{Anthony P. Young%\inst{1}
\orcidID{0000-0003-0747-3866} \and
David Kohan Marzag$\widetilde{\text{a}}$o%\inst{1}
\orcidID{0000-0001-8475-7913} \and
Josh Murphy}%\inst{1}}%\orcidID{2222--3333-4444-5555}}

\authorrunning{A. P. Young, D. Kohan Marzag$\widetilde{\text{a}}$o and J. Murphy}
% First names are abbreviated in the running head.
% If there are more than two authors, 'et al.' is used.
%
\institute{Department of Informatics, King's College London,\\ Bush House, Strand Campus, 30 Aldwych, WC2B 4BG \\ %\and
%Springer Heidelberg, Tiergartenstr. 17, 69121 Heidelberg, Germany
\{\href{mailto:peter.young@kcl.ac.uk}{peter.young}, \href{mailto:david.kohan@kcl.ac.uk}{david.kohan}, \href{mailto:josh.murphy@kcl.ac.uk}{josh.murphy}\}@kcl.ac.uk}%\\
%\url{http://www.springer.com/gp/computer-science/lncs} \and
%ABC Institute, Rupert-Karls-University Heidelberg, Heidelberg, Germany\\
%\email{\{abc,lncs\}@uni-heidelberg.de}
%
\maketitle              % typeset the header of the contribution
%

%\vspace{-0.6cm}

\begin{abstract}
We apply ideas from abstract argumentation theory to study cooperative game theory. Building on Dung's results in his seminal paper, we further the correspondence between Dung's four argumentation semantics and solution concepts in cooperative game theory by showing that complete extensions (the grounded extension) correspond to Roth's subsolutions (respectively, the supercore). We then investigate the relationship between well-founded argumentation frameworks and convex games, where in each case the semantics (respectively, solution concepts) coincide; we prove that three-player convex games do not in general have well-founded argumentation frameworks.
%there is a coincidence in semantics (respectively, solution concepts); we prove that three-player\todo{three-player? inconsistently used throughout the paper. I don't have a strong preference either way (but Rachel argues three-player is the style her journal uses, which isn't mathematical)} convex games do not in general have well-founded argumentation frameworks.

\keywords{Abstract argumentation theory  \and argumentation semantics \and cooperative game theory \and solution concepts \and convex games}% \and Another keyword.}
\end{abstract}

%\vspace{-0.3cm}

\section{Introduction}\label{sec:intro}

\textit{Argumentation theory} is the branch of artificial intelligence (AI) that is concerned with the rational and transparent resolution of disagreements between arguments (e.g. \cite{ArgAI}). \textit{Abstract} argumentation theory, as articulated in Dung's seminal paper \cite{Dung:95}, abstracts away from the contents of the arguments and the nature of their disagreements. The resulting directed graph (digraph) representation of arguments (nodes) and their disagreements (directed edges), called an \textit{abstract argumentation framework} (AF), is simple yet powerful enough to resolve these disagreements and determine the sets of winning arguments.% This provides an ``external'' theory of justification of arguments, as opposed to the ``internal'' theory of justification concerned with whether the conclusions of individual arguments follow from their premises.

% Dung demonstrated the ``correctness'' of abstract argumentation with two examples from microeconomics (e.g. \cite{MWG:95}): cooperative game theory and matching theory. In each case, Dung showed how appropriate AFs emerge from cooperative games and the stable marriage problem, and that the sets of winning arguments correspond to meaningful solution concepts in both of these domains. These examples show how abstract argumentation ``can be used to investigate the logical structure of the solutions to many practical
% problems'' \cite[Section 3]{Dung:95}.

Dung demonstrated the ``correctness'' of abstract argumentation by showing how abstract argumentation ``can be used to investigate the logical structure of the solutions to many practical problems'' \cite[Section 3]{Dung:95}. Specifically, he investigated two examples of problems from microeconomics (e.g. \cite{MWG:95}): cooperative game theory and matching theory. In each case, Dung showed how an appropriate AF can represent a given cooperative game or a given instance of the stable marriage problem, and that the sets of winning arguments in such AFs correspond to meaningful solutions in both of these domains.

In this paper, we further demonstrate the ``correctness'' of abstract argumentation theory by investigating its relationship with cooperative game theory. \textit{Cooperative game theory} (e.g. \cite{Chalkiadakis:11}) is the branch of game theory (e.g. \cite{vNM:44}) that studies how normatively rational agents may cooperate in order to possibly earn more payoff than they would do as individuals. Cooperative game theory abstracts away from individual agents' strategies in such games for a simpler and ``high level'' view of their interactions.\footnote{The nature of this cooperation is exogenous to the theory, but can be interpreted as groups of agents forming binding contracts. See, e.g. \cite[page 7]{Chalkiadakis:11}.}
%situation\todo{Situation sounds too static to me. scenario? interaction?}
Each \textit{cooperative game} consists of finitely many agents that can cooperate as \textit{coalitions}, and each coalition earns some payoff as measured by its \textit{value}. Given certain standard assumptions on the values of coalitions, all agents should cooperate as a single coalition, called the \textit{grand coalition}.
%\todo{it might be worth explaining this term? (the coalition comprising all agents)}. 
%\textit{How should the payoff obtained by the grand coalition be distributed among each agent?}
However, this still leaves open the question of how the payoff obtained by the grand coalition (which is already maximised) should be distributed among the individual agents, such that no agent should want to defect from the grand coalition. Historically, the first \textit{solution concept} for cooperative games that captures this idea is the \textit{Von Neumann-Morgenstern (vNM) stable set}, where each such set consists of such payoff distributions interpreted as an ``acceptable standard of behaviour'' \cite{vNM:44}.
%where each such payoff distribution is interpreted as an ``acceptable standard of behaviour'' \cite{vNM:44}. One principle in defining such payoff distributions is that all agents should not be incentivised to defect from the grand coalition.
%It is desirable to establish which distributions are those in which no agent is motivated to defect from the grand coalition. \todo{why? one sentence justification. e.g. vNM:44 because they represent ``acceptable standards of behaviour''}
Subsequently, Dung showed that each possible payoff distribution can be interpreted as an argument in an AF. Payoff distributions ``disagree'' when agents can defect because they can earn strictly more. This argumentative interpretation of cooperative games allowed Dung to demonstrate that the stable extensions of the AF of each cooperative game correspond exactly to the game's vNM stable sets \cite[Theorem 37]{Dung:95}.

%Dung showed that each possible payoff distribution can be interpreted as an argument for that payoff distribution, and such arguments ``disagree'' when some agents are incentivised to defect from the grand coalition because they can each earn strictly more if they do so. One can therefore sharpen the above question as: \textit{How should the payoff obtained by the grand coalition be distributed among each agent such that agents should not be able to defect?} Historically, the first such answer to this question, i.e. a \textit{solution concept} in cooperative game theory, is the \textit{Von Neumann Morgenstern stable set} \cite{vNM:44}, which represents ``acceptable'' payoff distributions. Dung showed that the stable extensions of the AFs of cooperative games correspond to such stable sets \cite[Theorem 37]{Dung:95}.

However, just like that stable extensions of an AF may not exist, vNM stable sets for cooperative games also may not exist \cite{Lucas:67,Lucas:69}. As a result, alternative solution concepts have been proposed.
For example, Dung %suggested
proposed that sets of payoff distributions that form preferred extensions could serve as an alternative solution concept, because preferred extensions of AFs always exist,
and therefore this is well-defined for all cooperative games. Other possible alternative solution concepts from cooperative game theory include the \textit{core} \cite{Gillies:59}, the \textit{subsolution} and the \textit{supercore} \cite{Roth:76}. Dung showed that the core corresponds to the set of unattacked arguments of the game's AF \cite[Theorem 38]{Dung:95}. This paper's first contribution is to finish the correspondence between Dung's four argumentation semantics and the various solution concepts from cooperative game theory by
proving that the complete extensions (respectively, the grounded extension) of the AF correspond(s) to the subsolutions (respectively, the supercore) of the cooperative game. 
These correspondences allow us to characterise when the supercore of a cooperative game is non-empty via the \textit{Bondareva-Shapley theorem} \cite{Bondareva:63,Shapley:67}: exactly when the game is  \textit{balanced} (see Section \ref{sec:supercore_is_grounded}).

Shapley investigated a special class of cooperative games called \textit{convex games}, which capture the intuition that agents have more incentive to join larger coalitions; the key property of each such game is that its core is its unique stable set \cite[Theorem 8]{Shapley:71}.
%\cite{Shapley:71}. If a game is \textit{convex}, then its core is its unique stable set\todo{is this shown in Shapley 71? If so, start the sentence with something like ``Further, Shapley showed...''.}. 
In abstract argumentation theory, a similar result by Dung states that if an AF is \textit{well-founded}, its grounded extension is its unique stable set \cite[Theorem 30]{Dung:95}. Given these similar results, we would like to know whether convex games always give rise to well-founded AFs. Our second contribution
in this paper is a negative answer to this question through a three-player counter-example.%\todo{I think the other contribution is well motivated now. But what is the motivation for the second contribution? Or, if it's just an academic exercise, are there any consequences of the result? Does it perhaps imply there is a less strong correspondence between CGT and AA, at least for infinite AFs?}

The rest of this paper is structured as follows. In Section \ref{sec:background}, we review abstract argumentation theory and cooperative game theory. In Section \ref{sec:subsolution_supercore}, we complete the correspondences between Dung's four argumentation semantics with solution concepts in cooperative games and study 
%some
the properties using well-known results from argumentation. In Section \ref{sec:convexity_well_foundedness}, we recap convex games and well-founded AFs, and give a counter-example of a three-player convex game that gives rise to a non-well-founded AF. In Section \ref{sec:related_work}, we compare our results with related work in argumentation theory and game theory, and conclude with future work.

%\vspace{-0.3cm}

\section{Background}\label{sec:background}

\textbf{Notation:} Let $X,Y,Z$ be sets. $\pow\pair{X}$ is the \textit{power set of $X$} and $|X|$ is the \textit{cardinality of $X$}.  $\nat$ is \textit{the set of natural numbers} (including $0$) and $\real$ is \textit{the set of real numbers}. Further, $\real^+$ (respectively, $\real^+_0$) is the \textit{set of positive (respectively, non-negative) real numbers}. For $n\in\nat$, the \textit{$n$-fold Cartesian power of $X$} is $X^n$. If $f:X\to Y$ and $g:Y\to Z$ are appropriate functions, then $g\circ f:X\to Z$ is the \textit{composition} of $g$ then $f$, and functional composition is denoted with $\circ$. If $X$ is a set, then for a function $f:X\to\real$, $f\geq 0$ abbreviates $\pair{\forall x\in X}f(x)\geq 0$.

%\vspace{-0.2cm}

\subsection{Abstract Argumentation Theory}\label{sec:abs_arg_recap}

An \textbf{(abstract) argumentation framework}\footnote{When defining our terms in this paper, any words in between brackets may be omitted when using the term, e.g. in this case, the terms ``argumentation framework'' and ``abstract argumentation framework'' are interchangeable.} (AF) is a directed graph (digraph) $\ang{A,R}$ where $A$ is 
%our
the \textbf{set of arguments} and $R\subseteq A^2$ denotes 
%our 
the \textbf{attack relation}, where for $a,b\in A$, $(a,b)\in R$, abbreviated by $R(a,b)$, denotes that $a$ disagrees with $b$. Let $S\subseteq A$ be a set of arguments for the remainder of this subsection. Define $S^+:=\set{b\in A\:\vline\:\pair{\exists a\in S}R(a,b)}$ to be the set of all arguments attacked by $S$. The \textbf{neutrality function} is $n:\pow\pair{A}\to\pow\pair{A}$ where $n(S):=A-S^+$ denotes the set of arguments \textit{not} attacked by $S$, i.e. the set of arguments that $S$ is \textit{neutral} towards. We say $S$ is \textbf{conflict-free} (cf) iff $S\subseteq n(S)$. Similar to $S^+$, let $S^-:=\set{b\in A\:\vline\:\pair{\exists a\in S} R(b,a)}$, and for $a\in A$, $a^{\pm}:=\set{a}^{\pm}$. The \textbf{defence function} is $d:\pow\pair{A}\to\pow\pair{A}$ where $a\in d(S)$ iff $a^-\subseteq S^+$.\footnote{In \cite[Section 2.2]{Dung:95} this is called the \textbf{characteristic function}.} It can be shown that $d=n^2:=n\circ n$ \cite[Lemma 45]{Dung:95}. Let $U\subseteq A$ denote the \textbf{set of unattacked arguments}. It can be shown that $U=d\pair{\es}$. We say $S$ is \textbf{self-defending} (sd) iff $S\subseteq d(S)$. Further, $S$ is \textbf{admissible} iff it is cf and sd.

To determine which sets of arguments are 
%winning, 
justifiable we say $S$ is a \textbf{complete extension} iff $S$ is admissible and $d(S)\subseteq S$.\footnote{i.e. if one can defend a proposition then one is obliged to accept it as justifiable.} Further, $S$ is a \textbf{preferred extension} iff it is a $\subseteq$-maximal complete extension \cite[Theorem 7.11]{Young:18}, $S$ is a \textbf{stable extension} iff $n(S)=S$ and $S$ is the \textbf{grounded extension} iff it is the $\subseteq$-least complete extension. These four semantics are collectively called the \textbf{Dung semantics}, and each defines sets of winning arguments given $\ang{A,R}$.

\subsection{Cooperative Game Theory}\label{sec:recap_CGT}

We 
%recap
review the basics of cooperative game theory (see, e.g. \cite{Chalkiadakis:11}). Let $m\in\nat$ and $N=\set{1,2,3,\ldots,m}$ be our \textbf{set of players} or \textbf{agents}.\footnote{We use ``$m$'' instead of the more traditional ``$n$'' for the number of players to avoid confusion with the neutrality function $n:\pow\pair{A}\to\pow\pair{A}$ in argumentation.} We assume $m\geq 1$. A \textbf{coalition} $C$ is any subset of $N$. The \textbf{empty coalition} is $\es$ and the \textbf{grand coalition} is $N$.

\begin{example}\label{eg:coalitions}
Consider the set of players $N=\set{1,2,3}$, where $m=3$. In our examples, agents 1, 2 and 3 are respectively named Josh, David and Peter. If all three decide to work together, then they form the grand coalition $N$. If only Josh and David work together and Peter works alone, then the resulting coalitions are, respectively, $\set{1,2}$ and $\set{3}$.
\end{example}

A \textbf{valuation function} is a function $v:\pow\pair{N}\to\real$ such that $v\pair{\es}=0$. The number $v(C)$ can be thought of as the coalition's payoff as a result of the agents' coordination of strategies; this payoff is in arbitrary units.

\begin{example}\label{eg:value}
(Example \ref{eg:coalitions} continued) If Josh, David and Peter work together and earn $10$ units of payoff, then $v(N)=10$. If Josh and Peter 
%dislike each other such that they 
will lose $50$ units of payoff if they work together, then $v(\set{1,3})=-50$.
\end{example}

\noindent Given $N$ and $v$, with $m:=|N|$, a \textbf{(cooperative) ($m$-player) game (in normal form)} is the pair $G:=\ang{N,v}$. The following five properties are standard for $v$. We say $v$ is \textbf{non-negative} iff $v\geq 0$; this excludes valuation functions such as the one in Example \ref{eg:value}. We say $v$ is \textbf{monotonic} iff for all $C,C'\subseteq N$, if $C\subseteq C'$, then $v(C)\leq v(C')$. We say $v$ is \textbf{constant-sum} iff $\pair{\forall C\subseteq N}v(C)+v(N-C)=v(N)$. We say $v$ is \textbf{super-additive} iff for all $C,C'\subseteq N$, if $C$ and $C'$ are disjoint then $v\pair{C\cup C'}\geq v\pair{C}+v\pair{C'}$. We say $v$ is \textbf{inessential} iff $\sum_{k=1}^m v\pair{\set{k}}=v\pair{N}$; inessential means there is no incentive to cooperate.

It is easy to show that if $v$ is non-negative and super-additive, then $v$ is monotonic, while the converse is not true (e.g. \cite[Example 1]{Caulier:09}). For the rest of the games in this paper, we will assume $v$ is non-negative, super-additive and \textbf{essential} (i.e. not inessential).\footnote{When combined with super-additivity, it follows that there are two disjoint coalitions $C$ and $C'$ such that $v(C\cup C')> v(C)+v(C')$.}%\todo[inline]{Re footnote: what if these two coalitions are not disjoint? What I mean is, the stronger version of the statement would be ``there are two disjoint coalitions such that $v\pair{C\cup C'}>v(C)+v(C')$''. Which one are we referring to?}

\begin{example}\label{eg:value2}
(Example \ref{eg:coalitions} continued) Suppose if Josh, David or Peter earn no payoff if they work as individuals, but if any two of them work together they earn $10$ units of payoff, and if all three work together they earn $20$ units of payoff. This $v$ is 
%clearly 
non-negative, super-additive, not constant-sum, and essential.
\end{example}

%\subsubsection{Outcomes and Payoff Vectors}

An \textbf{outcome} of a game is the pair $\pair{CS,{\bf{x}}}$, where $CS$ is a partition of $N$ called a \textbf{coalition structure}, and ${\bf{x}}\in\real^m$ is a \textbf{payoff vector} that distributes the value of each coalition to the players in that coalition. As we have assumed that $v$ is non-negative, super-additive and essential, then $v(N)$ has the (strictly) largest payoff among all coalitions by monotonicity. Agents are rational and want to maximise their payoff, and so they should seek to form the grand coalition. Therefore, we restrict our attention to outcomes where $CS=\set{N}$.%\todo{You've probably explained this to us, Peter, but I can't recall exactly: why should they seek to form the grand coalition in case they get a lower payoff?} 

How should the amount $v(N)$ be distributed among the $m$ players? %In this paper we consider \textbf{transferable utility games}, which allow us to distribute $v(N)$ \textit{however we wish} to the $m$ players
In this paper, we consider \textbf{transferable utility games}, which allow for the distribution of $v(N)$ arbitrarily to the $m$ players, e.g. by interpreting $v(N)$ as money, which all players should desire. This leads to the following properties of payoff vectors $\bf{x}$. We say ${\bf{x}}:=\pair{x_1,x_2,\ldots,x_m}$ is \textbf{feasible} iff $\sum_{k\in N} x_k\leq v(N)$, \textbf{efficient} iff $\sum_{k\in N} x_k= v(N)$ and \textbf{individually rational} iff $\pair{\forall k\in N}v\pair{\set{k}}\leq x_k$. We call a payoff vector $\bf{x}$ an \textbf{imputation} iff $\bf{x}$ is efficient and individually rational; intuitively, imputations distribute all the money to every agent without waste, such that every agent earns at least as much as when they work alone. Following \cite{Dung:95}, we denote the \textbf{set of imputations for a game $G$} with $\IMP(G)$, or just $\IMP$ if it is clear which cooperative game $G$ we are referring to.

\begin{example}\label{eg:imputations}
(Example \ref{eg:value2} continued) As $v(N)=20$ (which we now measure in dollars (\$), as an example of transferable utility), a feasible payoff vector is $\pair{5,5,5}$. An efficient payoff vector is $\pair{10,5,5}$. Notice that both vectors are individually rational because $\pair{\forall k\in N}v\pair{\set{k}}=0$. Therefore, $\pair{10,5,5}$ is a valid imputation, in which case Josh receives $\$10$, while David and Peter receive $\$5$ each.
\end{example}

%\subsubsection{Stable Sets}

The solution concepts of cooperative games that we will consider are concerned with whether coalitions of agents are incentivised to defect from the grand coalition.\footnote{See Section \ref{sec:related_work} for a brief mention of other solution concepts.} Given a game $G=\ang{N,v}$, let $C$ be a coalition and ${\bf{x}},{\bf{y}}\in \IMP$. We say $\bf{x}$ \textbf{dominates $\bf{y}$ via $C$}, denoted ${\bf{x}}\to_{C}{\bf{y}}$, iff (1) $\pair{\forall k\in C}x_k>y_k$ and (2) $\sum_{k\in C}x_k\leq v(C)$. Intuitively, given imputation $\bf{y}$, it is possible for a subset of players to defect from $N$ to form their own coalition $C$, where (1) they will each do strictly better because (2) they will earn enough to do so; the resulting payoff is $\bf{x}$. Note that for all $C\subseteq N$, $\to_C$ is a well-defined binary relation on $\IMP$.

\begin{example}\label{eg:domination}
(Example \ref{eg:imputations} continued) Consider the two imputations $(20,0,0)$ and $(12,4,4)$. Clearly, $(12,4,4)\to_{\set{2,3}}(20,0,0)$, because David (agent 2) and Peter (agent 3) can defect to $\set{2,3}$ and earn $4+4=8\leq v\pair{\set{2,3}}=10$,
%\todo{this is probably a stupid question, but why is that the case that they earn 8 and not 10 be defecting?}, 
%distributing that between themselves equally.
which is strictly better than both of them getting nothing in the imputation $(20,0,0)$.
\end{example}

It is easy to see from the definition of $\to_C$ that it is irreflexive, acyclic (and hence asymmetric), and transitive. Further, some important special cases include $\to_{N}=\es$ and $\pair{\forall k\in N}\to_{\set{k}}=\es$, i.e. it does not make sense for the grand coalition to defect to the grand coalition, and individual players cannot defect from $N$ (the latter due to individual rationality). Also, $\to_{\es}=\IMP^2$, the total binary relation on $\IMP$. We say imputation ${\bf{x}}$ \textbf{dominates} imputation $\bf{y}$, denoted ${\bf{x}}\to{\bf{y}}$, iff $\pair{\exists C\subseteq N}\sqbra{C\neq\es,\:{\bf{x}}\to_{C}{\bf{y}}}$. This is a well-defined irreflexive binary relation on $\IMP$. However, it can be shown that this is not generally transitive, complete or acyclic (e.g. \cite[Chapter 4]{Thomas:12}). Therefore, each cooperative game gives rise to an associated directed graph $\ang{IMP,\to}$, called an \textbf{abstract game} \cite{Roth:76}.

Let $I\subseteq \IMP$ be a set of imputations. Following \cite{vNM:44}, we say $I$ is \textbf{internally stable} iff no imputation in $I$ dominates another in $I$. Further, $I$ is \textbf{externally stable} iff every imputation not belonging to $I$ is dominated by an imputation from $I$. A \textbf{(von Neumann-Morgenstern) stable set} is a subset of $\IMP$ that is both internally and externally stable. Taken together, a stable set of imputations contains the distributions of the amount of money $v(N)$ to the set of $m$ players that are socially acceptable \cite{vNM:44}.

%\begin{example}
%\textcolor{red}{Calculate the vNM stable set of the running example game... may have to change game...? Note that $v(N)=20$ now, not $100$}
%\end{example}

%\subsubsection{Strategic Equivalence and \textopdfstring{(0,1)}{$(0,1)$}-Normalisation}

We now recapitulate a simplification to coalitional games that does not lose generality (e.g. \cite[Chapter 4]{Thomas:12}). Let $G:=\ang{N,v}$ and $G':=\ang{N, v'}$ be two games on the same set of players. We say $G$ and $G'$ are \textbf{strategically equivalent} iff $\pair{\exists K\in\real^+}\pair{\exists{\bf{c}}\in\real^m}\pair{\forall C\subseteq N}v'(C)=Kv(C)+\sum_{k\in C}c_k$, for ${\bf{c}}:=\pair{c_1,\ldots,c_m}$, and we denote this with $G\cong G'$; this is an equivalence relation between games. Further, the function $f:\IMP(G)\to \IMP(G')$ with rule $f\pair{\bf{x}}:=K{\bf{x}}+{\bf{c}}$ is a digraph isomorphism from $\ang{\IMP(G),\to}$ to $\ang{\IMP(G'),\to'}$, where $\to'$ is the corresponding domination relation on $\IMP(G')$. It follows that if $S\subseteq \IMP(G)$ is a stable set of $G$, then its image set $f(S)\subseteq \IMP(G')$ is also a stable set of $G'$. By setting $\frac{1}{K}=v(N)-\sum_{k\in N}v\pair{\set{k}}$ and $\pair{\forall k\in N}c_k=-Kv\pair{\set{k}}$ for this $K$, then we can transform $G=\ang{N,v}$ to its \textbf{$(0,1)$-normalised form}, $G^{(0,1)}:=\ang{N,v^{(0,1)}}$, such that $\pair{\forall k\in N}v\pair{\set{k}}= 0$ and $v(N)=1$. 

\begin{example}\label{eg:normalised}
(Example \ref{eg:domination} continued) We have that $\bf{c}=0$ and $K=\frac{1}{20}$, therefore $v^{(0,1)}\pair{C}=0$ if $|C|\leq 1$, $v^{(0,1)}\pair{C}=\frac{1}{2}$ if $|C|=2$, and $v^{(0,1)}(N)=1$.
\end{example}

This means $\IMP\pair{G^{(0,1)}}\subseteq\real^m$ is the standard $(m-1)$-dimensional topological simplex, which has the set $\set{(x_1,x_2,\ldots,x_m)\in\pair{\real^+_0}^m\:\vline\:\sum_{k=1}^m x_k=1}$.

\begin{corollary}\label{cor:uncountably_infinite}
As a set, $\IMP\pair{G^{\pair{0,1}}}$ is uncountably infinite.
\end{corollary}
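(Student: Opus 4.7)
The plan is to exhibit an explicit injection from an uncountable set into $\IMP\pair{G^{(0,1)}}$, using the concrete description given just before the corollary. First, I would note that our standing assumption that $G$ is essential forces $m\geq 2$: if $m=1$, then $\sum_{k=1}^{1}v\pair{\set{k}}=v\pair{\set{1}}=v(N)$, contradicting essentiality. Hence the simplex lives in $\real^m$ with $m\geq 2$, which gives room for a continuum of points.

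For each $t\in[0,1]$, define $\mathbf{x}(t):=\pair{t,\,1-t,\,0,\,\ldots,\,0}\in\real^m$. Then $\mathbf{x}(t)\in\pair{\real_0^+}^m$ and $\sum_{k=1}^m x_k(t)=t+(1-t)=1$, so by the explicit description of $\IMP\pair{G^{(0,1)}}$ preceding the corollary, $\mathbf{x}(t)\in \IMP\pair{G^{(0,1)}}$. The map $t\mapsto\mathbf{x}(t)$ is injective since $t$ is recoverable as the first coordinate. Therefore $\abs{\IMP\pair{G^{(0,1)}}}\geq\abs{[0,1]}$.

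To finish, I would invoke the standard fact that $[0,1]$ is uncountable (e.g.\ via Cantor's diagonal argument applied to binary expansions, or via the bijection between $[0,1]$ and $\real$). Since any set containing an uncountable subset is itself uncountable, the conclusion follows.

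There is no real obstacle here; the only subtlety is ruling out $m=1$, which is immediate from essentiality. Alternative formulations — such as noting that $\IMP\pair{G^{(0,1)}}$ contains the non-degenerate line segment joining $(1,0,\ldots,0)$ and $(0,1,0,\ldots,0)$, or that it is a topological space of positive dimension — lead to the same conclusion by equally short arguments.
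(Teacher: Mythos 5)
Your proposal is correct and uses essentially the same idea as the paper: an explicit injection of $[0,1]$ into $\IMP\pair{G^{(0,1)}}$ along a line segment of the simplex (the paper uses $t\mapsto K\pair{1-t,t,x_3,\ldots,x_m}$, which for $x_3=\cdots=x_m=0$ is your map up to swapping the first two coordinates). The only difference is that the paper additionally applies the Cantor--Schr\"oder--Bernstein theorem to conclude the sharper fact $\IMP\pair{G^{(0,1)}}\cong\real$, which is not needed for the stated claim of uncountability; your explicit observation that essentiality forces $m\geq 2$ is a small point the paper leaves implicit.
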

\begin{proof}
Let $\cong$ denote bijection between sets and $\hookrightarrow$ denote an injective embedding between sets, then we have $\real\cong\sqbra{0,1}\hookrightarrow \IMP\pair{G^{\pair{0,1}}}\subseteq\real^m\cong\real$, where $\hookrightarrow$ in this case is the injective function with rule $t\mapsto K\pair{1-t, t,x_3,\ldots, x_m}$, where $\frac{1}{K}:=1+\sum_{k=3}^m x_k$ normalises the output to be on the simplex. By the Cantor-Schr\"oder-Bernstein theorem \cite[Theorem 3.2]{Jech:03}, $\IMP\pair{G^{\pair{0,1}}}\cong\real$.
\end{proof}

\noindent Therefore, any abstract game $\ang{\IMP,\to}$ has uncountably infinitely many nodes. From now, we assume all games $G$ have $v$ that are non-negative, super-additive, not necessarily constant-sum, have transferable utility, and are $(0,1)$-normalised.

%\vspace{-0.2cm}

\subsection{From Cooperative Game Theory to Abstract Argumentation}

In this section we recap \cite[Section 3.1]{Dung:95}. We now understand how an abstract game $\ang{\IMP,\to}$, which is also an uncountably infinite digraph, arises from a game $G$. Dung interprets $\ang{IMP,\to}$ as an abstract AF, where each argument is an argument for a given payoff distribution among the agents, and each attack denotes the possibility for a subset of agents to defect from the grand coalition. Corollary \ref{cor:uncountably_infinite} states that such an AF has uncountably infinite arguments, but this is not a problem because Dung's argumentation semantics and their properties hold for AFs of arbitrary cardinalities \cite{Baumann:15,Dung:95,Young:18}. Dung then proved that various methods of resolving conflicts in $\ang{IMP,\to}$ as an AF correspond to meaningful solution concepts of $G$. The following result is straightforward to show.

\begin{theorem}
\cite[Theorem 37]{Dung:95} Let $G$ be a game and $\ang{\IMP,\to}$ be its abstract game. If we view $\ang{\IMP,\to}$ as an AF, then each of its stable extensions is a stable set of $G$, and each stable set of $G$ is a stable extension of $\ang{IMP,\to}$.
%its stable extensions (if they exist) correspond exactly to the stable sets of $G$.\todo{The converse is false, correct? If so, do you think it is worth mentioning it just after the theorem?}
\end{theorem}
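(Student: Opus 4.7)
The plan is to prove both inclusions simultaneously by unpacking the definitions, since the correspondence is essentially definitional once one identifies attack with domination. The key observation is that, by Dung's construction, the attack relation on the AF $\langle \IMP, \to\rangle$ \emph{is} the domination relation $\to$ on imputations; so the stable-extension condition on one side and the internal/external-stability condition on the other are the same statement formulated in different vocabulary.

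I would begin by fixing $I \subseteq \IMP$ and expanding what it means for $I$ to be a stable extension of $\langle \IMP, \to\rangle$. By definition, this is $n(I) = I$, which splits into $I \subseteq n(I)$ (conflict-freeness) and $n(I) \subseteq I$ (every argument not in $I$ is attacked by some argument in $I$). The first step is to translate conflict-freeness: $I \subseteq n(I)$ says that for all ${\bf x} \in I$, no ${\bf y} \in I$ satisfies ${\bf y} \to {\bf x}$, which is precisely the statement that $I$ is internally stable in the von Neumann--Morgenstern sense. The second step is to translate $n(I) \subseteq I$: taking contrapositives, every ${\bf x} \in \IMP \setminus I$ lies in $I^+$, i.e., there exists ${\bf y} \in I$ with ${\bf y} \to {\bf x}$, which is exactly external stability.

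Both implications then follow: if $I$ is a stable extension, the two reformulations above yield internal and external stability, so $I$ is a stable set; conversely, if $I$ is a stable set, the same two equivalences give $I \subseteq n(I)$ and $n(I) \subseteq I$, hence $n(I) = I$, so $I$ is a stable extension. One small point worth flagging is that the definition of $\to$ on $\IMP$ quantifies over nonempty coalitions $C$, and special cases such as $\to_N = \varnothing$, $\to_{\{k\}} = \varnothing$ and the convention that $\to_\varnothing$ is not used for domination were already established in the background section, so there is no hidden discrepancy between ``attack'' and ``domination via some coalition.''

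I do not anticipate any real obstacle, since the proof is a direct unfolding of definitions; the only mild subtlety is being careful that the ``attacks everything outside $I$'' half of the stable-extension definition corresponds to external stability (with the correct direction of the domination relation), which is handled by taking contrapositives cleanly. The result is therefore essentially a dictionary translation between the argumentation and game-theoretic vocabularies.
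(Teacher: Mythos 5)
Your proof is correct and is exactly the ``straightforward to show'' definitional unfolding that the paper (following Dung's Theorem 37) has in mind: splitting $n(I)=I$ into $I\subseteq n(I)$ (internal stability) and $n(I)\subseteq I$ (external stability) is the whole argument. Your side remark about $\to_\varnothing$ being excluded from the domination relation is also handled consistently with the paper's conventions.
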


\noindent As stable sets may not exist for AFs, and in particular there are games without stable sets \cite{Lucas:67,Lucas:69}, Dung 
%stated
proposed that preferred extensions, as they always exist \cite[Theorem 11(2)]{Dung:95}, can serve as an alternative solution concept for a game $G$ in cases where stable sets do not exist, because the properties and motivations of preferred extensions also capture the imputations that are rational wealth distributions among the $m$ players \cite[Section 3.1]{Dung:95}.

Further, another important solution concept in cooperative game theory is the core \cite{Gillies:59}. Formally, the \textbf{core} of a cooperative game $G$ is the set of imputations $\bf{x}$ satisfying the system of inequalities $\pair{\forall C\subseteq N}\sum_{k\in C}x_k\geq v(C)$.
%\todo{Am I correct to say that ``they are receiving at least as much as their coalition (in total)''? We don't know how they are sharing withing the coalition, correct?}
Intuitively, the core is the set of imputations where each agent is receiving at least as much payoff even if a subset of such agents were to defect to a new coalition (regardless of how the payoff is shared within that coalition). Therefore, no agent has an incentive to defect. It can be shown that the core is the subset of imputations that are not dominated by any other imputation. It follows that:

\begin{theorem}
\cite[Theorem 38]{Dung:95} Let $G$ be a game and $\ang{\IMP,\to}$ be its abstract game. If we view $\ang{\IMP,\to}$ as an AF, then its set of unattacked arguments corresponds exactly to the core.
\end{theorem}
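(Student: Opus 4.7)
The plan is to show both inclusions via the intermediate characterisation that an imputation $\mathbf{x}\in\IMP$ is unattacked in $\langle\IMP,\to\rangle$ precisely when no other imputation dominates it. This reformulation is immediate from the definition $\mathbf{x}^-=\{\mathbf{y}\in\IMP\mid \mathbf{y}\to\mathbf{x}\}$. It remains to prove that $\mathbf{x}$ satisfies the core inequalities $(\forall C\subseteq N)\sum_{k\in C}x_k\geq v(C)$ iff no imputation dominates $\mathbf{x}$.

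For the forward direction, assume $\mathbf{x}$ lies in the core and suppose for contradiction that $\mathbf{y}\to_C\mathbf{x}$ for some non-empty $C$. By the definition of $\to_C$, I get $v(C)\geq\sum_{k\in C}y_k>\sum_{k\in C}x_k\geq v(C)$, a contradiction. So no $\mathbf{y}$ dominates $\mathbf{x}$, and $\mathbf{x}$ is unattacked.

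For the reverse direction, I will argue contrapositively: suppose $\mathbf{x}\in\IMP$ is not in the core, and exhibit an imputation $\mathbf{y}$ that dominates $\mathbf{x}$. By assumption there is $C\subseteq N$ with $\sum_{k\in C}x_k<v(C)$. First I note $C\neq\varnothing$ (since $v(\varnothing)=0$) and $C\neq N$ (since $\mathbf{x}$ is efficient, $\sum_{k\in N}x_k=v(N)$). Let $\infl:=v(C)-\sum_{k\in C}x_k>0$ and define $y_k:=x_k+\infl/|C|$ for $k\in C$, so that $\sum_{k\in C}y_k=v(C)$ and $y_k>x_k$ for all $k\in C$; this will ensure $\mathbf{y}\to_C\mathbf{x}$ once $\mathbf{y}$ is shown to be an imputation. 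Since $v(C)\leq v(N)=1$ by monotonicity and $N-C\neq\varnothing$, I can distribute the remaining mass $1-v(C)\geq 0$ arbitrarily but non-negatively among the players in $N-C$ (for instance, equally). Because the game is in $(0,1)$-normalised form, $v(\{k\})=0$ for every $k$, so non-negativity of every $y_k$ already implies individual rationality; combined with $\sum_{k\in N}y_k=v(C)+(1-v(C))=1=v(N)$, this gives $\mathbf{y}\in\IMP$.

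The only delicate step is the construction of $\mathbf{y}$: one must simultaneously maintain strict improvement on $C$, coalition feasibility $\sum_{k\in C}y_k\leq v(C)$, and imputation status of $\mathbf{y}$. The $(0,1)$-normalisation carried out in Section~\ref{sec:recap_CGT} is exactly what makes the distribution on $N-C$ trouble-free, since individual rationality collapses to non-negativity. Everything else is routine book-keeping with the definitions of $\to_C$, the core, and $\IMP$.
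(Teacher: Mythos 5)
Your proof is correct and follows essentially the same route as the paper, which reduces the theorem to the classical fact that the core is exactly the set of imputations undominated by any other imputation and then identifies undominated imputations with unattacked arguments. The only difference is that the paper merely asserts this intermediate fact ("it can be shown") while you actually prove it, correctly using the $(0,1)$-normalisation and monotonicity to make the construction of the dominating imputation go through in the reverse direction.
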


\noindent From argumentation theory, we thus conclude the well-known result from cooperative game theory that stable sets, if they exist, always contain the core.%\todo{Is it too pedantic (or perhaps wrong) to say that they always contain the core, because if empty, the core is also empty and still contained?}

\begin{example}\label{eg:core}
(Example \ref{eg:normalised} continued) We have $N=\set{1,2,3}$ and $v(C)=0$ if $|C|\leq 1$ else $v(C)=\frac{1}{2}$, and $v(N)=1$. The eight possible coalitions $C\subseteq N$ give rise to the three inequalities $x_1+x_2\geq\frac{1}{2}, x_2+x_3\geq\frac{1}{2}$ and $ x_3+x_1\geq\frac{1}{2}$. Therefore, the core consists of all imputations ${\bf{x}}=\pair{x_1,x_2,x_3}$ whose components satisfy these three inequalities, for example, $\pair{\frac{1}{3},\frac{1}{3},\frac{1}{3}}$ and $\pair{\frac{1}{2},\frac{1}{2},0}$.
\end{example}

\section{Complete Extensions and the Grounded Extension}\label{sec:subsolution_supercore}

Given that stable extensions correspond to stable sets, and the set of unattacked arguments corresponds to the core, do the complete extensions (including the preferred extensions) and the grounded extension also correspond to solution concepts in cooperative games? We now show that the answer is yes.

\subsection{Complete Extensions Correspond to Subsolutions}

%As stated in Section \ref{sec:intro}, Von Neumann and Morgenstern conjectured that all games have stable sets, but this was disproved by Lucas \cite{Lucas:67,Lucas:69} with a non-trivial, ten-player game. Dung therefore argued that as preferred extensions always exist for argumentation frameworks \cite[Corollary 12]{Dung:95}, one can use preferred extensions of $\ang{\IMP,\to}$ as a viable solution concept for $G$, because the properties and motivations of preferred extensions also capture the imputations that are rational wealth distributions among the $m$ players.

As preferred extensions are a subset of complete extensions, it is natural to ask whether complete extensions more generally correspond to solution concepts in cooperative games. In \cite{Roth:76}, motivated by the general lack of existence of stable sets \cite{Lucas:67,Lucas:69}, Roth considered abstract games arising from cooperative games (in his notation) $\ang{X,>}$, where $X$ is the set of the game's \textbf{outcomes} and $>\subseteq X^2$ is an abstract \textbf{domination} relation. Let $u:\pow\pair{X}\to\pow\pair{X}$ be the function $u(S):=X-S^+$,\footnote{In \cite{Roth:76}, Roth uses $U$ for this function. Here, we use $u$ to avoid confusion with the set of unattacked arguments in an AF (defined in Section \ref{sec:abs_arg_recap}).} where in this case $S^+:=\set{y\in X\:\vline\:\pair{\exists x\in S}x>y}$. Roth then defined the \textit{subsolution} of such an abstract game as follows.

\begin{definition}
\cite[Section 2]{Roth:76} Let $\ang{X,>}$ be an abstract game. A \textbf{subsolution} is a set $S\subseteq X$ such that $S\subseteq u(S)$ and $S=u^2(S):=u\circ u(S)$.
\end{definition}

Immediately we can see that by interpreting $\ang{X,>}$ as an abstract argumentation framework, subsolutions are precisely the complete extensions.

\begin{theorem}\label{thm:complete_subsolutions}
%\todo{This is your result, correct?}
Let $G=\ang{N,v}$ be a game and $\ang{\IMP,\to}$ be its abstract game. When seen as an argumentation framework, the complete extensions $\ang{\IMP,\to}$ are precisely the subsolutions.
\end{theorem}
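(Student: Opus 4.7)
The plan is to prove this by directly unpacking definitions, as Roth's framework $\langle X, > \rangle$ with the function $u$ is formally identical to Dung's AF $\langle A, R\rangle$ with neutrality function $n$ once we identify $X = \IMP$, $> = \to$, and note that both $u$ and $n$ send a set $S$ to the complement of the set of objects it attacks. Thus $u = n$ on $\pow(\IMP)$.

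First, I would rewrite the two defining conditions of a subsolution in argumentation vocabulary. The condition $S \subseteq u(S)$ is, verbatim, the definition of conflict-freeness. The condition $S = u^2(S)$ splits into the two inclusions $S \subseteq u^2(S) = d(S)$, which is self-defence, and $d(S) = u^2(S) \subseteq S$, which is the ``completeness'' half of the definition of a complete extension.

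Next I would give the two directions of the equivalence. If $S$ is a subsolution, then $S \subseteq n(S)$ (conflict-free) and $S \subseteq d(S) \subseteq S$ (hence both self-defending and with $d(S) \subseteq S$), so $S$ is admissible with $d(S) \subseteq S$, i.e.\ a complete extension. Conversely, if $S$ is a complete extension then admissibility gives $S \subseteq n(S)$, i.e.\ $S \subseteq u(S)$; and self-defence together with completeness give the two inclusions $S \subseteq d(S)$ and $d(S) \subseteq S$, yielding $S = u^2(S)$. Hence $S$ is a subsolution.

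There is no real obstacle here: once one observes that $u$ coincides with $n$ under the AF reading of $\langle \IMP, \to \rangle$, the result is a direct translation between Roth's axioms and Dung's, relying only on the already-stated identity $d = n^2$ from \cite[Lemma 45]{Dung:95}. The only mild subtlety worth pointing out is that Roth's single equation $S = u^2(S)$ cleanly packages what Dung separates into the self-defence half of admissibility and the ``$d(S) \subseteq S$'' closure condition; making this decomposition explicit is really the whole content of the proof.
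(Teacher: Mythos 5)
Your proof is correct and follows essentially the same route as the paper's: identify $u$ with the neutrality function $n$ under the reading $X=\IMP$, $>\,=\,\to$, invoke $d=n^2$ from Dung's Lemma 45, and match Roth's two conditions against conflict-freeness, self-defence, and $d(S)\subseteq S$. The only difference is that you spell out the decomposition of $S=u^2(S)$ into the two inclusions explicitly, which the paper leaves implicit in the equivalence ``complete iff $S\subseteq n(S)$ and $S=d(S)$''.
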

\begin{proof}
$S\subseteq \IMP$ is a complete extension of $\ang{\IMP,\to}$ iff $S\subseteq n(S)$ and $S=d(S)$, where $n$ is the neutrality function $n(S)=\IMP-S^+$, but as $n^2=d$ \cite[Lemma 45]{Dung:95}, this is equivalent to saying that $S$ is a subsolution, by identifying $X=\IMP$, $>\:=\:\to$ and $u=n$.
\end{proof}

Roth has shown that every abstract game, and hence every cooperative game, has a subsolution \cite[Theorem 1]{Roth:76}.\footnote{Also, see the abstract lattice-theoretic proof in \cite{Roth:75}.}
%\begin{theorem}
%\todo{Add \cite[Theorem 1]{Roth:76}?}Every abstract game has a subsolution.
%\end{theorem}
%\begin{proof}
%See \cite[Theorem 1]{Roth:76}. In particular, every abstract game has a $\subseteq$-maximal subsolution, see the lattice-theoretic proof in \cite{Roth:75}. 
%\end{proof}
The $\subseteq$-maximal subsolutions of $\ang{\IMP,\to}$ are exactly the preferred extensions of $\ang{\IMP,\to}$ when seen as an abstract argumentation framework. Roth also showed that stable sets are subsolutions, and hence subsolutions generalise stable sets. This is well known in argumentation theory as stable extensions are also complete extensions. We can also apply further results from abstract argumentation theory to infer more properties of subsolutions. For instance, the core is contained in all subsolutions.

\begin{corollary}\label{cor:core_in_subsolution}
Every subsolution is a superset of the core.
\end{corollary}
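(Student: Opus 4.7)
The plan is to exploit the correspondences that have already been established. By Theorem \ref{thm:complete_subsolutions}, the subsolutions of the abstract game $\ang{\IMP,\to}$ are exactly its complete extensions when it is viewed as an AF, and by Dung's Theorem 38 the core coincides with the set $U$ of unattacked arguments of that same AF. Hence it suffices to show that $U\subseteq S$ for every complete extension $S$ of $\ang{\IMP,\to}$.

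So fix a complete extension $S$. By definition $S\subseteq d(S)$ (self-defence) and $d(S)\subseteq S$, so $S=d(S)$. Now let $a\in U$. Since $a$ is unattacked, $a^-=\es$, and therefore $a^-\subseteq S^+$ holds vacuously. By the definition of the defence function this gives $a\in d(S)$, and hence $a\in S$. Thus $U\subseteq S$, which is exactly the claim that the core is a subset of the subsolution corresponding to $S$.

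I do not expect any real obstacle here: the result is essentially a one-line consequence of the fact that unattacked arguments are defended by every set, combined with the completeness condition $d(S)\subseteq S$. An equivalent phrasing uses monotonicity of $d$, which is standard via $d=n\circ n$ \cite[Lemma 45]{Dung:95}: from $\es\subseteq S$ we obtain $U=d(\es)\subseteq d(S)=S$. Either formulation yields the corollary immediately once the identifications core $\leftrightarrow U$ and subsolution $\leftrightarrow$ complete extension are in place.
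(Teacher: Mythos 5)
Your proof is correct and essentially identical to the paper's: the paper uses precisely your second formulation, namely that $U=d(\es)\subseteq d(S)=S$ by $\subseteq$-monotonicity of the defence function together with $d(S)=S$ for complete extensions. Your first, more explicit unfolding (unattacked arguments are vacuously defended by every set) is just the same argument spelled out pointwise.
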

\begin{proof}
Interpreting $\ang{\IMP,\to}$ as an argumentation framework, the core corresponds to the set of unattacked arguments, which is $d\pair{\es}$, where $d$ is the defence function (Section \ref{sec:abs_arg_recap}). As a subsolution $S$ is a complete extension, we know that $d(S)=S$. As $d$ is $\subseteq$-monotonic and $\es\subseteq S$, we have $d\pair{\es}\subseteq S$.%\todo{I like the proof! Mine was by contradiction, but why do it if you can be intuitionistic for a while?}
\end{proof}

\noindent Further, subsolutions have a specific lattice-theoretic structure:

\begin{theorem}\label{thm:complete_semilattice}
The family of subsolutions of an abstract game form a complete semilattice that is also directed-complete.
\end{theorem}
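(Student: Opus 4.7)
The plan is to reduce the statement to a known result about abstract argumentation frameworks via Theorem \ref{thm:complete_subsolutions}, then invoke Dung's classical result on the lattice-theoretic structure of complete extensions.

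First, I would note that by Theorem \ref{thm:complete_subsolutions}, the subsolutions of $\ang{\IMP,\to}$, ordered by $\subseteq$, are precisely the complete extensions of $\ang{\IMP,\to}$ viewed as an AF. Hence it suffices to prove that the family of complete extensions of any AF forms a complete semilattice that is also directed-complete.

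Second, I would appeal to Dung's Theorem 25 in \cite{Dung:95}, which establishes exactly this: the set of complete extensions of any AF, ordered by $\subseteq$, is a complete semilattice. Recall that a complete semilattice is a partial order in which every non-empty subset has a greatest lower bound and every directed subset has a least upper bound. The second conjunct is precisely directed-completeness, so both parts of the conclusion follow simultaneously from this single citation. Because Dung's proof works at the level of arbitrary (including infinite) AFs, and by Corollary \ref{cor:uncountably_infinite} our AF $\ang{\IMP,\to}$ is uncountably infinite, it is important to note that applicability to AFs of arbitrary cardinality was already observed in Section \ref{sec:abs_arg_recap} and follows from \cite{Baumann:15,Dung:95,Young:18}.

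The main obstacle here is not mathematical difficulty but rather ensuring that the translation between Roth's vocabulary (subsolutions, $u$, $>$) and Dung's vocabulary (complete extensions, $n$, $R$) is watertight; this was already handled cleanly in the proof of Theorem \ref{thm:complete_subsolutions} by the identifications $X=\IMP$, $>\,=\,\to$, and $u=n$, so the present proof reduces to one or two sentences of citation. An alternative, more self-contained route would be to re-derive directed-completeness directly from the fact that $d$ is $\subseteq$-monotonic on $\pow(\IMP)$ and from the Knaster--Tarski fixed point theorem applied to directed unions of admissible sets, but this would merely reproduce Dung's argument.
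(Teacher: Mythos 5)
Your reduction is the same as the paper's: identify subsolutions with complete extensions via Theorem \ref{thm:complete_subsolutions} and then quote the known structural result for complete extensions of an AF. For the complete-semilattice part this is exactly what the paper does (citing Dung's Theorem 25(3)).

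There is, however, a gap in how you handle directed-completeness. You assert that a complete semilattice is, by definition, a poset in which every non-empty subset has a greatest lower bound \emph{and every directed subset has a least upper bound}, so that directed-completeness comes for free from the single citation of Dung's Theorem 25. That is the domain-theoretic convention, but it is not Dung's: in \cite{Dung:95} a complete semilattice only requires least upper bounds of \emph{increasing sequences} ($\omega$-chains), not of arbitrary directed subsets. For the AFs in this paper this distinction has teeth, since $\IMP$ is uncountably infinite (Corollary \ref{cor:uncountably_infinite}) and a directed family of complete extensions need not contain a cofinal increasing sequence; closure under lubs of chains therefore does not by itself yield closure under lubs of directed sets. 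This is precisely why the theorem statement says ``that is \emph{also} directed-complete'' and why the paper's proof cites two results ``respectively'': \cite[Theorem 25(3)]{Dung:95} for the complete-semilattice structure and, separately, \cite[Theorem 6.30]{Young:18} for directed-completeness. To repair your proof you either need that second citation or a genuine argument (e.g.\ via the $\subseteq$-monotonicity of $d$ and a fixed-point construction above the union of a directed family of complete extensions) that the lub of an arbitrary directed family of complete extensions exists; your closing remark that such an argument ``would merely reproduce Dung's argument'' understates what is missing, since Dung's theorem as stated does not cover it.
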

\begin{proof}
This follows from \cite[Theorem 25(3)]{Dung:95} and \cite[Theorem 6.30]{Young:18}, respectively.
\end{proof}

\noindent We will give an example of subsolutions in Section \ref{sec:convexity_well_foundedness} (Example \ref{eg:solution}).

%\begin{example}
%\textcolor{red}{Give examples of subsolutions from running example}
%\end{example}

\subsection{The Supercore is the Grounded Extension}\label{sec:supercore_is_grounded}

In \cite[Example 5.1]{Roth:76}, Roth showed that subsolutions are in general not unique for abstract games; the supercore is one ``natural'' way of selecting a unique subsolution.

\begin{definition}
\cite[Section 3]{Roth:76} The \textbf{supercore} of an abstract game is the intersection of all its subsolutions.
\end{definition}

\noindent Immediately we can conclude the following.

\begin{theorem}\label{thm:grounded_supercore}
The supercore of an abstract game is its grounded extension when viewed as an argumentation framework.
\end{theorem}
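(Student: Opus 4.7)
The plan is to chain Theorem \ref{thm:complete_subsolutions} (subsolutions $=$ complete extensions) with the definition of the grounded extension as the $\subseteq$-least complete extension; once these two facts are in hand, the result reduces to a one-line set-theoretic observation.

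First I would restate the supercore, via Theorem \ref{thm:complete_subsolutions}, as the intersection $\bigcap \mathcal{C}$ where $\mathcal{C}$ denotes the family of complete extensions of $\ang{\IMP,\to}$. Then I would use the fact (recalled in Section \ref{sec:abs_arg_recap}) that the grounded extension $G_{\ast}$ is itself a complete extension and is $\subseteq$-contained in every complete extension. From this, containment in both directions is immediate: on the one hand $G_{\ast}\subseteq S$ for every $S\in\mathcal{C}$, so $G_{\ast}\subseteq\bigcap\mathcal{C}$; on the other hand $G_{\ast}\in\mathcal{C}$, so $\bigcap\mathcal{C}\subseteq G_{\ast}$. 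Hence $\bigcap\mathcal{C}=G_{\ast}$, i.e., the supercore coincides with the grounded extension.

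There is essentially no obstacle here, since the argument is a direct consequence of the previous theorem together with the defining property of the grounded extension. The only item worth double-checking is the (standard) fact that the grounded extension exists and is the $\subseteq$-least complete extension even when the underlying AF is uncountable, as in the present setting; this is guaranteed by the results cited in Section \ref{sec:abs_arg_recap} (see \cite{Baumann:15,Dung:95,Young:18}), which establish Dung's semantics for AFs of arbitrary cardinality. The proof is therefore best presented as a short remark deriving the equality from Theorem \ref{thm:complete_subsolutions} and this well-known characterisation of the grounded extension.
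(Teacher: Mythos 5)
Your proposal is correct and follows essentially the same route as the paper: the paper's proof simply cites the standard fact (Dung's Theorem~25 / Young's Corollary~6.8) that the grounded extension equals the intersection of all complete extensions, which is exactly the fact you derive explicitly from Theorem~\ref{thm:complete_subsolutions} together with the definition of the grounded extension as the $\subseteq$-least complete extension. Your added caution about existence of the least complete extension for AFs of arbitrary cardinality is appropriate and is covered by the same citations.
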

\begin{proof}
This follows from e.g. \cite[Theorem 25(5)]{Dung:95}, or \cite[Corollary 6.8]{Young:18}.
\end{proof}

\noindent It follows that the supercore exists and is unique for all abstract games, and hence cooperative games. Further, the supercore is a special case of a subsolution because the grounded extension is complete. Also, the supercore contains the core, because the grounded extension contains all unattacked arguments, by Corollary \ref{cor:core_in_subsolution}. We will give an example of the supercore in Section \ref{sec:convexity_well_foundedness} (Example \ref{eg:solution}).

Therefore, for arbitrary cooperative games, if stable sets exist, they can serve as the possible sets of recommended payoff distributions, i.e. the ``acceptable standards of behaviour'' \cite{vNM:44}. If they do not exist, then we may use subsolutions instead. If we desire a unique subsolution, we can use the supercore.

However, Roth noted that the supercore is empty iff the core is empty. This corresponds to the well-known result in argumentation theory that the set of unattacked arguments is empty iff the grounded extension is empty (e.g. \cite[Corollary 6.9]{Young:18}). We can therefore completely characterise cooperative games with non-empty supercores using the \textit{Bondareva-Shapley theorem} \cite{Bondareva:63,Shapley:67}. Let $G=\ang{N,v}$ be a cooperative game. A function $f:\pow\pair{N}\to\sqbra{0,1}$ is \textbf{balanced} iff $\pair{\forall k\in N}\sum_{S\subseteq N-\set{k}}f\pair{S\cup\set{k}}=1$, i.e. if for every player the value under $f$ of all coalitions containing that player sum to one. We call $G$ \textbf{balanced} iff for every balanced function $f$, $\sum_{S\subseteq N}f\pair{S}v(S)\leq v(N)$.
%\todo{Is there any special case for which this is an equality?}
Intuitively, each player $k$ allocates a fraction of his or her time $f\pair{S\cup\set{k}}$ to the coalition $v(S)$, and that coalition receives a value proportional to that agent's time spent there.

\begin{theorem}\label{thm:BS_thm}
(Bondareva-Shapley) A game has a non-empty core iff it is balanced.
\end{theorem}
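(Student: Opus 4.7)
The plan is to prove the Bondareva-Shapley theorem via linear programming duality. First, I would encode core non-emptiness as a linear program. Consider the primal
\[
\text{minimise } \sum_{k\in N} x_k \quad\text{subject to}\quad \sum_{k\in C} x_k \geq v(C) \text{ for every non-empty } C\subseteq N,
\]
with $x_k\in\real$ unrestricted. The constraint for $C=N$ already forces the optimum to be at least $v(N)$, and the core is non-empty precisely when this optimum equals $v(N)$: any optimiser $\mathbf{x}^*$ with $\sum_k x_k^*=v(N)$ satisfies every core inequality and is efficient, whereas if every feasible $\mathbf{x}$ has $\sum_k x_k>v(N)$ then no efficient $\mathbf{x}$ can satisfy the constraints, leaving the core empty.

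Since the primal variables are free, the LP dual has equality constraints, namely
\[
\text{maximise } \sum_{\varnothing\neq C\subseteq N} f(C) v(C) \quad\text{subject to}\quad \sum_{C\ni k} f(C) = 1 \text{ for every } k\in N,\; f(C) \geq 0.
\]
Extending $f$ by $f(\varnothing):=0$, which leaves the objective unchanged since $v(\varnothing)=0$, identifies the dual feasible points precisely with the paper's balanced functions; each $f(C)$ automatically lies in $[0,1]$, being bounded above by $\sum_{C'\ni k} f(C')=1$ for any $k\in C$.

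Finally, I would invoke strong LP duality. Both programs are feasible: the primal admits any $\mathbf{x}$ with each $x_k$ sufficiently large, while the dual admits $f(N)=1$ and $f(C)=0$ for $C\neq N$, which is a balanced function with objective value $v(N)$. Hence the dual maximum is at least $v(N)$, and by strong duality it equals the primal minimum. Assembling the pieces, the core is non-empty iff the primal minimum is at most $v(N)$, iff the dual maximum is at most $v(N)$, iff $\sum_{S\subseteq N} f(S) v(S) \leq v(N)$ for every balanced $f$, i.e.\ the game is balanced. The main obstacle is simply invoking strong LP duality in a self-contained manner; a clean resolution is to cite the standard LP duality theorem, or, if a more elementary argument is wanted, to derive the non-trivial implication from Farkas' lemma applied to the system of core inequalities.
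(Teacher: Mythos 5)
The paper does not prove this statement at all: Theorem~\ref{thm:BS_thm} is imported as a classical result, cited to Bondareva and Shapley, so there is no in-paper argument to compare against. Your linear-programming proof is the standard (indeed essentially the original) argument for this theorem, and it is correct. The reduction of core non-emptiness to ``primal optimum $= v(N)$'' is sound, since the singleton constraints subsume individual rationality and the constraint for $C=N$ forces the optimum to be at least $v(N)$; the identification of dual-feasible points with balanced functions is handled carefully (including the harmless extension by $f(\varnothing)=0$ and the observation that each $f(C)$ automatically lies in $[0,1]$); and both programs are exhibited as feasible, so strong duality applies and the finite-dimensional minimum is attained. The only external ingredient is strong LP duality (or Farkas' lemma), which you correctly flag; citing it is entirely acceptable, and with that citation the proof is complete and self-contained in a way the paper's bare citation is not.
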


\noindent It follows that:

\begin{corollary}\label{cor:supercore_nonempty_balanced}
A game has a non-empty supercore iff it is balanced.
\end{corollary}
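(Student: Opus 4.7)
The plan is to reduce the corollary to the Bondareva-Shapley theorem (Theorem \ref{thm:BS_thm}) by showing that the supercore is non-empty if and only if the core is non-empty. Once we have this equivalence, Theorem \ref{thm:BS_thm} immediately yields the result, since balancedness is precisely the condition characterising non-emptiness of the core.

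For the equivalence between non-emptiness of the supercore and non-emptiness of the core, I would prove the two directions separately. The easy direction is that if the core is non-empty, then the supercore is non-empty: by Corollary \ref{cor:core_in_subsolution}, every subsolution is a superset of the core, and since the supercore is by definition the intersection of all subsolutions, it too is a superset of the core, hence non-empty. For the converse, I would appeal to the argumentation-theoretic identification already established in the paper. By Theorem \ref{thm:grounded_supercore}, the supercore coincides with the grounded extension of $\langle \IMP, \to\rangle$ viewed as an AF, and by Dung's Theorem 38 (quoted in Section \ref{sec:background}) the core coincides with the set of unattacked arguments $U = d(\es)$. It is a standard fact about grounded extensions (see \cite[Corollary 6.9]{Young:18}, referenced in the prose preceding this corollary) that the grounded extension is empty if and only if $U$ is empty; this follows because the grounded extension is the least fixed point of the monotonic operator $d$, constructed by iterating $d$ from $\es$, so if $d(\es) = \es$ then every iterate is empty, and conversely the grounded extension always contains $d(\es)$. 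Translating back through the two correspondences gives the desired converse.

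Combining the two steps, the supercore is non-empty iff the core is non-empty iff (by Theorem \ref{thm:BS_thm}) the game is balanced, which is exactly the statement of the corollary.

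The main conceptual point, rather than an obstacle, is just recognising that the bridge is built entirely out of results already in the paper: the ``core $\subseteq$ supercore'' inclusion from Corollary \ref{cor:core_in_subsolution}, the supercore-as-grounded-extension identification from Theorem \ref{thm:grounded_supercore}, and the core-as-unattacked-arguments identification from Dung's Theorem 38. No additional game-theoretic computation is required; the argument is a short assembly of existing pieces together with one invocation of Bondareva-Shapley.
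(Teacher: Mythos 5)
Your proposal is correct and follows essentially the same route as the paper: reduce to the equivalence ``supercore non-empty iff core non-empty'' and then invoke the Bondareva-Shapley theorem. The only difference is that you spell out the justification for that equivalence (via Corollary \ref{cor:core_in_subsolution} and the grounded-extension/unattacked-arguments correspondence), whereas the paper simply cites it from the preceding prose.
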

\begin{proof}
The core of a game is non-empty iff its supercore is non-empty, iff it is balanced, by the Bondareva-Shapley theorem (Theorem \ref{thm:BS_thm}).
\end{proof}

From both argumentation theory and abstract games in cooperative game theory, we have the following well-known containment relations between the solution concepts in cooperative games.

\begin{theorem}
Let $G$ be a cooperative game. Its stable sets are $\subseteq$-maximal subsolutions, which are subsolutions, and the supercore is also a subsolution.
\end{theorem}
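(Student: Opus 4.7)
The plan is to observe that this theorem is a synthesis of results already established in the excerpt, so the proof amounts to translating each clause through the argumentation/game-theoretic dictionary and invoking well-known containments among Dung's semantics. I would first unpack the three claims: (i) every stable set of $G$ is a $\subseteq$-maximal subsolution of $\langle \IMP,\to\rangle$; (ii) every $\subseteq$-maximal subsolution is, trivially, a subsolution; and (iii) the supercore is a subsolution.

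For (i), I would invoke Dung's Theorem 37 (cited earlier) to move from a stable set of $G$ to the corresponding stable extension of $\langle\IMP,\to\rangle$ viewed as an AF. Then I would use the standard argumentation-theoretic fact that every stable extension is a preferred extension, i.e. a $\subseteq$-maximal complete extension (this is already noted in Section~\ref{sec:abs_arg_recap}, where preferred extensions are defined as $\subseteq$-maximal complete extensions, together with the easy observation that a stable extension $S$ with $n(S)=S$ satisfies $S=d(S)\supseteq S$ and admits no strict complete superset since any $T\supsetneq S$ would contain an argument attacked by $S$, contradicting conflict-freeness of $T$). Finally, Theorem~\ref{thm:complete_subsolutions} identifies complete extensions with subsolutions, so $\subseteq$-maximal complete extensions coincide with $\subseteq$-maximal subsolutions, yielding (i). Clause (ii) is immediate from the definition of maximality.

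For (iii), I would apply Theorem~\ref{thm:grounded_supercore} to identify the supercore with the grounded extension of $\langle\IMP,\to\rangle$, and then note that the grounded extension is by definition the $\subseteq$-least complete extension and hence, in particular, a complete extension; one more appeal to Theorem~\ref{thm:complete_subsolutions} converts this into being a subsolution.

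There is no real obstacle here, as every ingredient has been established in the preceding sections; the only thing that requires a moment's care is the inclusion of stable extensions among preferred (hence $\subseteq$-maximal complete) extensions, but this is a classical fact from \cite{Dung:95} and could simply be cited rather than re-proven. The proof will accordingly be a short chain of references to Theorem~37, Theorem~\ref{thm:complete_subsolutions}, and Theorem~\ref{thm:grounded_supercore}, with one line of justification for the stable-implies-preferred step.
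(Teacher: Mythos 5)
Your proposal is correct and follows essentially the same route as the paper, whose proof is a one-line appeal to the facts that stable extensions are preferred, preferred extensions are complete, and the grounded extension is complete, combined with the established correspondences. Your version merely spells out the translation steps (via Theorem 37 and Theorems \ref{thm:complete_subsolutions} and \ref{thm:grounded_supercore}) and the stable-implies-preferred argument in more detail.
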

\begin{proof}
Immediate, because in argumentation theory, stable extensions are preferred, which are complete. Further, the grounded extension is also complete.
\end{proof}

\subsection{Summary}

We summarise the correspondences between Dung's argumentation semantics and the solution concepts of cooperative games in the Table \ref{table:semantics_and_solutions},
% \todo{Why not add Corollary \label{cor:core_in_subsolution}?}
including the results presented in this paper.

\begin{table}[h]
\centering
\caption{A Table Showing Concepts in Abstract Argumentation and the Corresponding Concept in Cooperative Games}
\begin{tabular}{|c|c|c|} \hline
\textbf{Abstract Argumentation} & 
\textbf{Cooperative Game} & \textbf{Reference} \\\hline
Set of arguments $A$ & Set of imputations $\IMP$ & \cite[Section 3.1]{Dung:95} \\\hline
Attack relation $R$ & Domination relation $\to$ & \cite[Section 3.1]{Dung:95} \\\hline
Argumentation Framework $\ang{A,R}$ & Abstract Game $\ang{\IMP,\to}$ & \cite[Section 3.1]{Dung:95} \\\hline
Unattacked arguments & The Core & \cite[Thm. 38]{Dung:95} \\\hline
The Grounded Extension & The Supercore & Thm. \ref{thm:grounded_supercore} \\\hline
Complete Extensions & Subsolutions & Thm. \ref{thm:complete_subsolutions} \\\hline
Preferred Extensions & $\subseteq$-maximal Subsolutions & \cite[Section 3]{Dung:95}, Thm. \ref{thm:complete_subsolutions} \\\hline
Stable Extensions & Stable Sets & \cite[Thm. 37]{Dung:95} \\\hline
\end{tabular}
\label{table:semantics_and_solutions}
\end{table}

\noindent Further, we have shown that the supercore exists, and is unique and non-empty, iff the game is balanced (Corollary \ref{cor:supercore_nonempty_balanced}). Finally, the family of subsolutions form a complete semilattice that is also directed-complete (Theorem \ref{thm:complete_semilattice}). These correspondences allow us to apply ideas from argumentation to cooperative game theory, as we will in the next section.

% \begin{table}
% \caption{Table captions should be placed above the
% tables.}\label{tab1}
% \begin{tabular}{|l|l|l|}
% \hline
% Heading level &  Example & Font size and style\\
% \hline
% Title (centered) &  {\Large\bfseries Lecture Notes} & 14 point, bold\\
% 1st-level heading &  {\large\bfseries 1 Introduction} & 12 point, bold\\
% 2nd-level heading & {\bfseries 2.1 Printing Area} & 10 point, bold\\
% 3rd-level heading & {\bfseries Run-in Heading in Bold.} Text follows & 10 point, bold\\
% 4th-level heading & {\itshape Lowest Level Heading.} Text follows & 10 point, italic\\
% \hline
% \end{tabular}
% \end{table}

\section{Convex three-player Cooperative Games and Well-Founded Argumentation Frameworks}\label{sec:convexity_well_foundedness}

Having shown how abstract argumentation can be used in cooperative game theory, we now investigate the relationship between well-founded AFs and convex games, because in both cases the semantics (respectively, solution concepts) coincide.

\subsection{Convex Games and Coincidence of Solution Concepts}\label{sec:convex}

An important type of cooperative game is that of a convex game \cite{Shapley:71}. Formally, $G=\ang{N,v}$ is \textbf{convex} iff $\pair{\forall C,C'\subseteq N}v\pair{C\cup C'}+v\pair{C\cap C'}\geq v(C)+v(C')$. Clearly, convex games are super-additive. This property is equivalent to \cite[Proposition 2.8]{Chalkiadakis:11}: for all $S,T\subseteq N$, if $T\subseteq S$, then $\pair{\forall k\in N-S} v\pair{T\cup\set{k}}-v(T)\leq v\pair{S\cup\set{k}}-v(S)$. Intuitively, this means as a coalition grows in size, there is more incentive for agents not already in the coalition to join. Shapley calls this a ``band-wagon'' effect \cite{Shapley:71}. Further, if $G\cong G'$ and $G$ is convex, then $G'$ is also convex.\footnote{\label{fn:convex_strategic_equivalence} This can be shown by writing out the definition of convex for the coalitions in $G'$ given the definition of strategic equivalence in Section \ref{sec:recap_CGT}, and then applying the inclusion-exclusion principle for sets.}  The key property of convex games that we focus on is:

\begin{theorem}
\cite[Theorem 8]{Shapley:71} The core of a convex game is stable.
\end{theorem}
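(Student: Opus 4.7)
The plan is to verify that the core is both internally and externally stable in $\IMP(G)$ under the domination relation $\to$. Internal stability is immediate: the core, as noted just before Theorem 38 in Section \ref{sec:recap_CGT}, is precisely the set of imputations undominated by any imputation, so no core element can be dominated by another core element.

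For external stability, fix an imputation ${\bf{y}}$ not in the core; I must produce ${\bf{x}}$ in the core and a non-empty $S\subseteq N$ with ${\bf{x}}\to_S {\bf{y}}$. Writing $x(C):=\sum_{k\in C}x_k$ for brevity, the natural candidate is the following ``top-up'' construction. Pick $S$ of \emph{minimal cardinality} among the non-empty coalitions violating the core inequality for ${\bf{y}}$, so that $y(S)<v(S)$ while $y(T)\geq v(T)$ for every proper $T\subsetneq S$. Efficiency rules out $S=N$ and individual rationality combined with $(0,1)$-normalisation forbids $|S|=1$, so $2\leq |S|\leq m-1$. Define $x_k:=y_k+\pair{v(S)-y(S)}/|S|$ for $k\in S$; this gives $x_k>y_k$ strictly on $S$ and $x(S)=v(S)$, so the two requirements for $\to_S$ will be met the moment ${\bf{x}}$ is completed to a core element.

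The completion uses the reduced game $v^S:\pow\pair{N-S}\to\real$ defined by $v^S(T):=v\pair{T\cup S}-v(S)$. Applying the convex inequality of $v$ to $A\cup S$ and $B\cup S$ for $A,B\subseteq N-S$ shows $v^S$ is itself convex, hence has non-empty core (any marginal-contribution vector lies in it), and taking $\pair{x_k}_{k\in N-S}$ from any core element of $v^S$ makes $x(N)=v(N)$ immediate. The main obstacle is then the verification that ${\bf{x}}$ satisfies the core constraint $x(T)\geq v(T)$ for every $T\subseteq N$: decompose $T=T_1\cup T_2$ disjointly with $T_1:=T\cap S$ and $T_2:=T-S$. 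For $T_1=S$ the estimate follows directly from the core condition on $v^S$ applied to $T_2$. For $T_1\subsetneq S$, minimality of $S$ provides $x(T_1)\geq y(T_1)\geq v(T_1)$, and the convex inequality of $v$ applied to the sets $T_1\cup T_2$ and $S$ (noting $\pair{T_1\cup T_2}\cup S=T_2\cup S$ and $\pair{T_1\cup T_2}\cap S=T_1$) yields $v\pair{T_2\cup S}-v(S)\geq v\pair{T_1\cup T_2}-v(T_1)$, which combined with $x(T_2)\geq v^S(T_2)$ gives $x(T)\geq v(T)$. Marrying the minimality of $S$ with the symmetric form of the convex inequality at this last step is the only delicate point; once it is in place, ${\bf{x}}\to_S {\bf{y}}$ with ${\bf{x}}$ in the core, closing external stability.
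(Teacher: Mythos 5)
Your argument is correct, but note that the paper does not prove this statement at all: it is imported as a citation to \cite[Theorem 8]{Shapley:71}, so there is no in-paper proof to compare against. What you have written is a complete, self-contained derivation along the lines of the standard reduced-game proof of Shapley's theorem. The steps check out: internal stability is immediate from the characterisation of the core as the set of undominated imputations; choosing $S$ of minimal cardinality with $y(S)<v(S)$ is exactly what makes $y(T_1)\geq v(T_1)$ available for every proper $T_1\subsetneq S$; the reduced game $v^S$ inherits convexity because $\pair{A\cup S}\cap\pair{B\cup S}=\pair{A\cap B}\cup S$ and $\pair{A\cup S}\cup\pair{B\cup S}=\pair{A\cup B}\cup S$ for $A,B\subseteq N-S$; and your two-case verification of the core inequalities for the completed ${\bf{x}}$ (the case $T_1=S$ via the reduced core condition, the case $T_1\subsetneq S$ via minimality plus the convex inequality applied to $T_1\cup T_2$ and $S$) is exactly right, with the case $T_1=\es$ handled automatically since $v\pair{\es}=0$. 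The one ingredient you invoke without proof is that a convex game has a non-empty core witnessed by its marginal-contribution vectors; this is a separate, earlier result of Shapley's in the same paper (not Theorem 8 itself), so there is no circularity, but a fully self-contained write-up should cite or prove it. A minor quibble: individual rationality alone already rules out $\abs{S}=1$, so the appeal to $(0,1)$-normalisation there is unnecessary.
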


\noindent From our results in Section \ref{sec:subsolution_supercore}, we can immediately show the following.

\begin{corollary}\label{cor:unattacked_stable}
If $G$ is convex, then the set of unattacked arguments of its AF $\ang{\IMP,\to}$ is the only stable extension.
\end{corollary}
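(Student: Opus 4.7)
The plan is to chain together three results that have already been assembled in the excerpt: Shapley's Theorem 8, Dung's Theorem 38 (unattacked arguments $\leftrightarrow$ core), and Dung's Theorem 37 (stable extensions $\leftrightarrow$ stable sets). The first step is to invoke Shapley's Theorem 8 to conclude that since $G$ is convex, its core is a stable set. Then, using Theorem 38, identify the core with the set of unattacked arguments $U$ of $\ang{\IMP,\to}$, and via Theorem 37 translate the statement ``the core is a stable set'' into ``$U$ is a stable extension of $\ang{\IMP,\to}$''. This handles the existence half of the claim: $U$ is a stable extension.

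The remaining work is uniqueness, which is the only step requiring a genuine argumentation-theoretic observation beyond the previously stated correspondences. The plan here is to argue that if $S\subseteq \IMP$ is any stable extension, then $U\subseteq S$, because every unattacked argument belongs to every complete (hence every stable) extension, being already in $d(\es)\subseteq d(S)=S$; this is essentially the same monotonicity argument used in Corollary \ref{cor:core_in_subsolution}. Now, because we already know from the previous step that $U$ itself is stable, $U$ attacks every argument in $\IMP-U$. So if there were some $a\in S-U$, then $a\in\IMP-U$ would be attacked by some $u\in U\subseteq S$, contradicting conflict-freeness of $S$. Therefore $S=U$.

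The main obstacle is not computational but simply making sure the uniqueness step cites the right chain of inclusions, since the stability of $U$ as an extension is what allows us to promote the ``$U\subseteq S$'' inclusion to the equality $S=U$. Everything else is a direct translation through Table \ref{table:semantics_and_solutions}, so the proof should be short — essentially a one-paragraph argument combining Shapley's theorem with Theorems 37 and 38 for existence, and the monotonicity of the defence function plus conflict-freeness for uniqueness.
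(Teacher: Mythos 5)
Your proof is correct, but it reaches the conclusion by a different route than the paper. The paper's proof is a one-line appeal to Dung's Theorems 37 and 38 together with the full strength of Shapley's Theorem 8, namely that the core of a convex game is the \emph{unique} stable set (as stated in the introduction); the translation through the correspondences then delivers both existence and uniqueness of the stable extension at once. You instead use Shapley's theorem only for the weaker claim that the core is \emph{a} stable set, and then establish uniqueness purely on the argumentation side: every stable extension $S$ contains $U=d\pair{\es}$ by monotonicity of $d$, and since $U$ is itself stable it attacks all of $\IMP-U$, so any $a\in S-U$ would violate conflict-freeness of $S$. This is a sound argument (it is essentially the general AF fact that whenever the set of unattacked arguments is a stable extension it is the unique stable --- indeed the unique complete --- extension, and it mirrors the reasoning the paper itself uses in Corollary~\ref{cor:unattacked_stable2} to show $U$ must then be the grounded extension). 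What your version buys is self-containedness: it does not rely on the uniqueness half of Shapley's result, and given that the paper's own restatement of Theorem 8 in Section~\ref{sec:convex} only says the core ``is stable,'' your explicit uniqueness step actually closes a small gap that the paper's ``immediate'' proof glosses over. The cost is a slightly longer argument where the paper gets by with pure citation.
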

\begin{proof}
Immediate from \cite[Theorems 37 and 38]{Dung:95}.
\end{proof}

\noindent Convex games exhibit a coincidence of the solution concepts so far considered.

\begin{corollary}\label{cor:unattacked_stable2}
If $G$ is convex, then its core is also its supercore, which is also its unique subsolution.
\end{corollary}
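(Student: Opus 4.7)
The plan is to exploit the correspondence table together with Corollary \ref{cor:unattacked_stable} and some elementary facts about the Dung semantics. Write $U:=d(\varnothing)$ for the set of unattacked arguments of the AF $\langle\IMP,\to\rangle$; by \cite[Theorem 38]{Dung:95}, $U$ is the core of $G$. The hypothesis of convexity, via Corollary \ref{cor:unattacked_stable}, tells us that $U$ is the \emph{unique} stable extension of $\langle\IMP,\to\rangle$.

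First I would lift ``unique stable'' to ``unique complete'' as follows. Since $U$ is a stable extension, it is in particular a preferred extension, i.e.\ a $\subseteq$-maximal complete extension (stable $\Rightarrow$ preferred is part of the standard hierarchy among Dung semantics and is used implicitly in Section \ref{sec:subsolution_supercore}). On the other hand, by Corollary \ref{cor:core_in_subsolution} every complete extension of $\langle\IMP,\to\rangle$ contains the core, i.e.\ contains $U$. Combining these two observations: for any complete extension $S$ we have $U\subseteq S$, and by maximality of $U$ this forces $S=U$. Hence $U$ is the one and only complete extension.

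Finally I would read the conclusion off the correspondence table. By Theorem \ref{thm:complete_subsolutions}, the complete extensions of $\langle\IMP,\to\rangle$ are exactly the subsolutions of $G$, so $U$ is the unique subsolution. By Theorem \ref{thm:grounded_supercore}, the grounded extension (the $\subseteq$-least complete extension) corresponds to the supercore; but since there is only one complete extension, that $\subseteq$-least one is $U$ itself, so the supercore equals $U$, which is the core. This chain gives core $=$ supercore $=$ unique subsolution, as required.

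The main ``obstacle'' is really bookkeeping rather than a genuine difficulty: one has to be careful to invoke the right direction of each correspondence and to recall that stable extensions sit at the top of the complete lattice as preferred (maximal complete) extensions, so that a unique stable extension which also lies at the bottom (containing no proper complete subset, since it is already contained in every complete extension) must collapse the whole family of complete extensions to a single point.
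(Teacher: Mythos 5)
Your proof is correct and follows essentially the same route as the paper's: both show that the unique stable extension $U$ (the core) is simultaneously contained in every complete extension and maximal among them, forcing the family of complete extensions --- i.e.\ the subsolutions --- to collapse to $\set{U}$, whence the grounded extension/supercore equals $U$ as well. The only cosmetic difference is that the paper phrases the collapse via conflict-freeness of the grounded extension (a superset of $U$ that would contain a $U$-attacked argument if it were strictly larger), whereas you use stable~$\Rightarrow$~preferred and maximality; your version is in fact slightly more explicit about why the subsolution is unique.
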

\begin{proof}
If $G$ is convex, then viewing its abstract game $\ang{IMP,\to}$ as an AF, its set $U$ of unattacked arguments is the unique stable extension \cite[Theorem 37]{Dung:95}. But if $U$ is stable, then $U$ is also the grounded extension - else the grounded extension, as a superset of $U$, is not conflict-free. Thus $U$ is also the supercore of $G$ (Theorem \ref{thm:grounded_supercore}). As $U$ is unique, it is the unique subsolution of $G$.
\end{proof}

\subsection{Well-Founded Argumentation Frameworks}

Now recall from abstract argumentation theory we have a sufficient condition for an AF to have all four of Dung's argumentation semantics coincide.

\begin{theorem}\label{thm:well_founded_one_winning_set}
\cite[Theorem 30]{Dung:95} If $\ang{A,R}$ is well-founded, i.e. there is no $A$-sequence\footnote{\label{fn:X-sequence} Let $X$ be a set, an \textit{$X$-sequence} is a function $f:\nat\to X$ written as $\set{x^i}_{i\in\nat}$, so $\pair{\forall i\in\nat}x^i\in X$.} $\set{a^i}_{i\in\nat}$ such that $\pair{\forall i\in\nat}R(a^{i+1},a^{i})$ \cite[Definition 29]{Dung:95}, then its grounded extension is the unique stable, complete and preferred extension.
\end{theorem}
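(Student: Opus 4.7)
The plan is to establish the theorem in two stages: first, that the grounded extension $G$ is a stable extension under well-foundedness; and second, that any complete extension of $\ang{A,R}$ must coincide with $G$. Since every stable extension and every preferred extension is complete, this second fact automatically delivers uniqueness across all three semantics simultaneously, while the first stage guarantees stable existence (stable extensions do not exist in general).

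For the first stage, recall that $G$ is complete and hence conflict-free, so $G\subseteq n(G)$ already holds; it suffices to prove the reverse inclusion, i.e.\ that every argument outside $G$ is attacked by some element of $G$. I would argue by contradiction: suppose there is some $a^0\in A$ with $a^0\notin G$ and $a^0\notin G^+$. Since $a^0\notin G=d(G)$, unpacking the definition of $d$ yields an attacker $a^1$ of $a^0$ with $a^1\notin G^+$. Moreover $a^1\notin G$ as well, because otherwise $a^1\in G$ attacking $a^0$ would place $a^0$ into $G^+$, contradicting the standing assumption. Applying dependent choice, I iterate this construction to obtain an $A$-sequence $\set{a^i}_{i\in\nat}$ with $R\pair{a^{i+1},a^i}$ for every $i\in\nat$, contradicting well-foundedness.

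For the second stage, let $E$ be any complete extension. By the $\subseteq$-minimality of $G$ among complete extensions, $G\subseteq E$. If there were some $a\in E\setminus G$, then the stability of $G$ from the previous step would give some $g\in G$ attacking $a$; but $g\in G\subseteq E$ would then witness a conflict in $E$, contradicting conflict-freeness. Hence $E=G$, so $G$ is the unique complete extension, which by the inclusions of preferred and stable extensions within the complete extensions makes $G$ the unique preferred and stable extension as well.

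The main obstacle I expect is the inductive construction of the infinite descending attack chain. Setting up the contrapositive correctly, and then verifying at each step that the newly chosen attacker lies outside both $G$ and $G^+$ (so that the induction hypothesis can be reapplied), requires some care; a suitable choice principle is also needed, but dependent choice suffices, even though the AF $\ang{\IMP,\to}$ arising from a cooperative game may be uncountable. The remaining manipulations with the neutrality function $n$ and defence function $d$ are essentially bookkeeping.
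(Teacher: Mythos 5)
Your proof is correct. Note that the paper itself offers no proof of this statement --- it is quoted verbatim as \cite[Theorem 30]{Dung:95} --- so there is nothing internal to compare against; your argument is a sound self-contained reconstruction of Dung's result. Both stages check out: the descending-chain construction correctly maintains the invariant $a^i\notin G$ and $a^i\notin G^+$ (using $G=d(G)$ to extract the next attacker outside $G^+$, and conflict with $a^i\notin G^+$ to rule out $a^{i+1}\in G$), and the second stage correctly collapses every complete extension onto $G$ via $G$'s stability and conflict-freeness, from which uniqueness for the preferred and stable semantics follows since both are special cases of complete. The only stylistic difference from Dung's original argument is that he works with the classically equivalent minimal-element formulation of well-foundedness (every non-empty subset contains an argument not attacked from within that subset), applied to the set of arguments neither in nor attacked by the grounded extension, whereas you build an explicit infinite attack chain; your route makes the reliance on dependent choice visible, which you rightly flag, and is worth noting given that the AFs $\ang{\IMP,\to}$ arising in this paper have uncountably many arguments.
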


\noindent Given that the consequences of Theorem \ref{thm:well_founded_one_winning_set} and Corollaries \ref{cor:unattacked_stable} and \ref{cor:unattacked_stable2} are the same, one may ask whether the AFs arising from convex games is in some sense ``stronger'' than well-founded AFs. Indeed, convex games refer to unattacked arguments $U$, while well-founded AFs refer to the grounded extension, a superset of $U$. Could it be that convex games always give rise to well-founded AFs? We answer this question in the following section.

%Note that an analogous result is not true for general AFs.

%\begin{corollary}\label{cor:U_STAB_not_wf}
%It is not the case that if an AF is such that $U$ is a stable extension, then it is well-founded.
%\end{corollary}
%\begin{proof}
%Consider the AF $A=\set{a,b,c}$ and $R=\set{(a,b),(a,c),(b,c),(c,b)}$, depicted in Figure \ref{fig:nixon}. Clearly $U=\set{a}$ is stable, but this AF is not well-founded, with an infinite ``backwards'' attack chain alternating in arguments $b$ and $c$.

%\begin{figure}[h]
% https://tex.stackexchange.com/questions/69216/increase-the-bending-distance-of-a-to-path-in-tikz
%\begin{center}
%\begin{tikzpicture}[>=stealth',shorten >=1pt,node distance=2cm,on grid,initial/.style    ={}]
%\tikzset{mystyle/.style={->,relative=false,in=0,out=0}};
%\node[state] (b) at (0,0) {$ b $};
%\node[state] (c) at (2,0) {$ c $};
%\draw [->] (b) to [out = 45, in = 135] (c);
%\draw [->] (c) to [out = 225, in = -45] (b);
%\node[state] (a) at (1,-2) {$ a $};
%\draw [->] (a) to (b);
%\draw [->] (a) to (c);
%\end{tikzpicture}
%\caption{The AF from Corollary \ref{cor:U_STAB_not_wf}}\label{fig:nixon}
%\end{center}
%\end{figure}
%\end{proof}

%\noindent But, this counter-example from Corollary \ref{cor:U_STAB_not_wf} is irrelevant to convex games as games give rise to uncountably infinite AFs (Corollary \ref{cor:uncountably_infinite}). Therefore, one can refine the question to: if a game is convex, then is its abstract game, seen as an (uncountably infinite) AF, well-founded?

\subsection{Three-Player Convex Games}

To make this problem more tractable, we specialise to three-player convex games. We will assume the following canonical form without loss of generality:

\begin{theorem}\label{thm:standard_form_three_player}
\cite[Slide 19]{Lui:08} Every essential three-player game that is not necessarily constant-sum is strategically equivalent to the $(0,1)$-normalised three-player game $\ang{N,v^{(0,1)}}$ where $v^{(0,1)}(C)=0$ if $|C|\leq 1$, $v^{(0,1)}(N)=1$, and for $a,b,c\in\sqbra{0,1}$, $v^{(0,1)}\pair{\set{1,2}}=a$, $v^{(0,1)}\pair{\set{2,3}}=b$ and $v^{(0,1)}\pair{\set{3,1}}=c$.
\end{theorem}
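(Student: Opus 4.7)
The plan is to apply the $(0,1)$-normalisation procedure described at the end of Section~\ref{sec:recap_CGT} and then verify directly that the resulting values of the three two-player coalitions lie in $\sqbra{0,1}$ and can be named $a,b,c$.

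First I would check that the normalisation is well-defined. Iterating super-additivity twice gives $v(N) \geq v\pair{\set{1,2}} + v\pair{\set{3}} \geq \sum_{k \in N} v\pair{\set{k}}$, and essentiality rules out the overall equality $v(N) = \sum_{k \in N} v\pair{\set{k}}$, so the inequality is strict. Hence $\frac{1}{K} := v(N) - \sum_{k \in N} v\pair{\set{k}} > 0$, so $K > 0$ and $c_k := -K v\pair{\set{k}}$ are valid parameters of a strategic equivalence. By construction $v^{(0,1)}\pair{\es} = 0$, $v^{(0,1)}\pair{\set{k}} = 0$ for each $k \in N$, and $v^{(0,1)}(N) = 1$.

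For a two-player coalition $\set{i,j} \subset N$ with complement $\set{k}$, direct computation gives
\[
v^{(0,1)}\pair{\set{i,j}} = K\pair{v\pair{\set{i,j}} - v\pair{\set{i}} - v\pair{\set{j}}}.
\]
Super-additivity of $v$ on the disjoint pair $\set{i}, \set{j}$ shows this quantity is $\geq 0$. For the upper bound, $v^{(0,1)}\pair{\set{i,j}} \leq 1$ rearranges (using the definition of $K$) to $v\pair{\set{i,j}} + v\pair{\set{k}} \leq v(N)$, which is super-additivity of $v$ on the disjoint pair $\set{i,j}, \set{k}$. Setting $a := v^{(0,1)}\pair{\set{1,2}}$, $b := v^{(0,1)}\pair{\set{2,3}}$, and $c := v^{(0,1)}\pair{\set{3,1}}$ then yields the claimed canonical form with $a,b,c \in \sqbra{0,1}$.

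I do not expect any substantive obstacle: the argument is essentially bookkeeping around the normalisation formula from Section~\ref{sec:recap_CGT}. The only slightly delicate point is the strict positivity of $\frac{1}{K}$, which is precisely where the essentiality assumption combined with super-additivity is used; every other step reduces to one of the two relevant applications of super-additivity to the original game.
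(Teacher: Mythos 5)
Your proof is correct. Note, however, that the paper offers no proof of this statement at all --- it is imported by citation to lecture slides --- so there is nothing internal to compare against; what you have written is a self-contained derivation that the paper leaves implicit. Your route is the natural one: instantiate the $(0,1)$-normalisation of Section~\ref{sec:recap_CGT} with $\frac{1}{K}=v(N)-\sum_{k\in N}v\pair{\set{k}}$ and $c_k=-Kv\pair{\set{k}}$, verify $K>0$, and compute the two-player coalition values. Two points are worth making explicit. First, your observation that essentiality plus super-additivity forces $v(N)>\sum_{k\in N}v\pair{\set{k}}$ (so that $K\in\real^+$ and the map is a genuine strategic equivalence) is exactly the step that would fail for an inessential game, and it is good that you isolated it. Second, the claim $a,b,c\in\sqbra{0,1}$ is \emph{not} a consequence of essentiality alone --- for a general essential game the normalised two-player values need only be real numbers --- and your proof correctly locates both the lower bound ($v\pair{\set{i,j}}\geq v\pair{\set{i}}+v\pair{\set{j}}$) and the upper bound ($v\pair{\set{i,j}}+v\pair{\set{k}}\leq v(N)$) in the paper's standing super-additivity assumption. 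Since the theorem as stated mentions only essentiality, it would be worth flagging that the $\sqbra{0,1}$ range relies on the blanket super-additivity hypothesis adopted at the end of Section~\ref{sec:recap_CGT}; with that caveat the argument is complete.
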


\noindent Convexity constrains the parameters $a,b$ and $c$ as follows.

\begin{corollary}\label{cor:convex_inequalities}
Every essential three-player convex game $G$ that is not necessarily constant-sum is strategically equivalent to the $(0,1)$-normalised game $\ang{N,v^{(0,1)}}$ defined in Theorem \ref{thm:standard_form_three_player} iff $a+b\leq 1$, $b+c\leq 1$ and $c+a\leq 1$.
\end{corollary}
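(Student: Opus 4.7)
The plan is to reduce to checking the convexity inequality on the canonical $(0,1)$-normalised game and then to enumerate the finitely many pairs of coalitions in $N=\set{1,2,3}$ for which it gives a non-trivial constraint. By Theorem \ref{thm:standard_form_three_player}, every essential three-player game is strategically equivalent to some $\ang{N,v^{(0,1)}}$ of the stated form, and by footnote \ref{fn:convex_strategic_equivalence} convexity is preserved under strategic equivalence. Hence $G$ is convex iff its normalised form $\ang{N,v^{(0,1)}}$ is convex, and the task reduces to characterising those $a,b,c\in\sqbra{0,1}$ for which $v^{(0,1)}\pair{C\cup C'}+v^{(0,1)}\pair{C\cap C'}\geq v^{(0,1)}(C)+v^{(0,1)}(C')$ holds for all $C,C'\subseteq N$.

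The first observation is that whenever $C\subseteq C'$ or $C'\subseteq C$, the convexity inequality collapses to equality, so only $\subseteq$-incomparable pairs need be examined. In $N$ these split into three families: (i) two distinct singletons $\set{i},\set{j}$; (ii) a singleton $\set{i}$ together with a $2$-element coalition not containing $i$; and (iii) two distinct $2$-element coalitions. Substituting $v^{(0,1)}\pair{\es}=0$, $v^{(0,1)}\pair{\set{k}}=0$ and $v^{(0,1)}(N)=1$, family (i) gives $v^{(0,1)}\pair{\set{i,j}}\geq 0$ and family (ii) gives $1\geq v^{(0,1)}(\text{pair})$, both automatic from $a,b,c\in\sqbra{0,1}$. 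Family (iii) consists of exactly the three pairs $\set{\set{1,2},\set{2,3}}$, $\set{\set{2,3},\set{3,1}}$ and $\set{\set{3,1},\set{1,2}}$, each with union $N$ and intersection a singleton, so after substitution the three corresponding inequalities become precisely $1\geq a+b$, $1\geq b+c$ and $1\geq c+a$.

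Combining the three families shows convexity of $\ang{N,v^{(0,1)}}$ is equivalent to the conjunction $a+b\leq 1$, $b+c\leq 1$, $c+a\leq 1$, yielding both directions of the iff. I anticipate no serious obstacle, as the argument is a finite case analysis on an eight-element power set. The only care required is to invoke that strategic equivalence preserves convexity, so that passing from $G$ to its $(0,1)$-normalised form loses no information, and to verify that all pairs outside family (iii) collapse to constraints already guaranteed by $a,b,c\in\sqbra{0,1}$.
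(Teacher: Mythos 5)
Your proposal is correct and follows essentially the same route as the paper's (sketched) proof: transfer convexity across strategic equivalence via the footnote, then check the convexity inequality over all pairs of coalitions in the $(0,1)$-normalised game to extract the three constraints. Your explicit classification of the $\subseteq$-incomparable pairs simply fills in the case analysis that the paper leaves implicit.
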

\begin{proof}
(Sketch) ($\Rightarrow$) If $G$ is convex and $G\cong \ang{N,v^{(0,1)}}$, then $\ang{N,v^{(0,1)}}$ is convex (Footnote \ref{fn:convex_strategic_equivalence}). There are $2^3=8$ distinct coalitions, which we can use to write out the inequality in the definition of convex to conclude the resulting three inequalities on $a$, $b$ and $c$. ($\Leftarrow$) If $\ang{N,v^{(0,1)}}$ is a game such that $a$, $b$ and $c$ satisfies the three inequalities, then $\ang{N,v^{(0,1)}}$ is convex (Section \ref{sec:convex}) and any game strategically equivalent to it, in particular $G$, is also convex.
\end{proof}

\begin{example}\label{eg:solution}
(Example \ref{eg:core} continued) Our game is convex, as $a=b=c=\frac{1}{2}$, by Corollary \ref{cor:convex_inequalities}. Therefore, the core of this game, calculated in Example \ref{eg:core} to be the set of imputations ${\bf{x}}=(x_1,x_2,x_3)$ such that $x_1+x_2\geq\frac{1}{2}$, $x_2+x_3\geq\frac{1}{2}$ and $x_3+x_1\geq\frac{1}{2}$, is also the supercore, the only subsolution, and the only stable set.
\end{example}

The next result shows that three-player convex games do not always give rise to well-founded AFs.

\begin{theorem}\label{thm:convex_does_not_mean_well_founded}
The game in Examples \ref{eg:coalitions} to \ref{eg:solution} is an essential, non-constant-sum three-player convex game whose AF is not well-founded.
\end{theorem}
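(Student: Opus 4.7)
The plan is to verify the four adjectives on the game and then witness the failure of well-foundedness by explicitly constructing an infinite $\IMP$-sequence that is descending under $\to$.

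For the first part, the game from Examples \ref{eg:coalitions}–\ref{eg:solution} is three-player with $v(\{k\})=0$, $v(C)=\tfrac{1}{2}$ for $|C|=2$, and $v(N)=1$. It is essential since $\sum_{k} v(\{k\}) = 0 \neq 1 = v(N)$; it is not constant-sum since $v(\{1\})+v(\{2,3\})=\tfrac{1}{2}\neq 1=v(N)$; and it is convex by Corollary \ref{cor:convex_inequalities} because $a=b=c=\tfrac{1}{2}$ and the three inequalities $a+b,b+c,c+a\leq 1$ all hold with equality.

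For the second (main) part, I would exhibit an $\IMP$-sequence $\{{\bf y}^i\}_{i\in\nat}$ with $\pair{\forall i\in\nat}{\bf y}^{i+1}\to{\bf y}^{i}$, which by Theorem \ref{thm:well_founded_one_winning_set}'s definition of well-foundedness (cf. footnote \ref{fn:X-sequence}) shows the AF is not well-founded. Define
\[
{\bf y}^i := \pair{\tfrac{i}{4(i+1)},\ \tfrac{i}{4(i+1)},\ \tfrac{i+2}{2(i+1)}},
\]
which is an imputation because its components are non-negative and sum to $1$. I would check that ${\bf y}^{i+1}\to_{\{1,2\}}{\bf y}^{i}$ by verifying the two conditions of domination: first, $\tfrac{i+1}{4(i+2)} - \tfrac{i}{4(i+1)} = \tfrac{1}{4(i+1)(i+2)}>0$, so both $y^{i+1}_1>y^{i}_1$ and $y^{i+1}_2>y^{i}_2$; second, $y^{i+1}_1+y^{i+1}_2 = \tfrac{i+1}{2(i+2)} < \tfrac{1}{2} = v(\{1,2\})$, so the defecting pair can afford the split.

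The only potential obstacle is ensuring the sequence can be continued indefinitely rather than stabilising in the core, but this is precisely what the strict inequality $y^{i+1}_1+y^{i+1}_2 < \tfrac{1}{2}$ guarantees: each ${\bf y}^{i}$ lies strictly outside the core (since $y^i_1+y^i_2 < \tfrac{1}{2}$ violates one of the core inequalities from Example \ref{eg:core}), leaving ``room'' for a strictly better pairwise split at the next step. Geometrically, the coordinates $(y^i_1,y^i_2)$ are squeezed monotonically from $(0,0)$ toward $(\tfrac{1}{4},\tfrac{1}{4})$ without ever reaching it, and the third coordinate decreases monotonically from $1$ toward $\tfrac{1}{2}$ without ever reaching it, so the sequence is well-defined for all $i\in\nat$ and all attacks are via the fixed coalition $\{1,2\}$. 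This completes the counter-example.
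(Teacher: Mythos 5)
Your proof is correct and takes essentially the same approach as the paper: verify the adjectives from the earlier examples, then exhibit an infinite $\IMP$-sequence $\set{{\bf y}^i}_{i\in\nat}$ with ${\bf y}^{i+1}\to_{\set{1,2}}{\bf y}^i$ converging toward $\pair{\frac{1}{4},\frac{1}{4},\frac{1}{2}}$ from outside the core. Your sequence $\frac{i}{4(i+1)}=\frac{1}{4}-\frac{1}{4(i+1)}$ differs from the paper's $\frac{1}{4}-\frac{1}{2(i+2)}$ only in the rate of approach, and both verifications are sound.
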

\begin{proof}
Example \ref{eg:coalitions} states this game is three-player, Example \ref{eg:solution} states that this game is convex, and Example \ref{eg:value2} states that this game is essential and not constant-sum. Let $\ang{IMP,\to}$ denote the abstract game from our examples, now seen as an AF. Consider the following $IMP$-sequence $\set{{\bf{x}}^i}_{i\in\nat}$ where%\footnote{Unlike in Footnote \ref{fn:X-sequence}, we write the natural number index $i$ of a general sequence term as a superscript in order to refer to each of the three payoff components of $\bf{x}$.}
\begin{equation}\label{eq:counter}
{\bf{x}}^i:=\pair{x^i_1, x^i_2, x^i_3}=\pair{\frac{1}{4}-\frac{1}{2(i+2)},\: \frac{1}{4}-\frac{1}{2(i+2)},\: \frac{1}{2}+\frac{1}{i+2}}.
\end{equation}
Clearly, this is a well-defined imputation for all $i\in\nat$, because the three components sum to 1 and each component is non-negative.

We now show that $\pair{\forall i\in\nat}{\bf{x}}^{i+1}\to{\bf{x}}^{i}$, and hence $\ang{IMP,\to}$ is not a well-founded AF. We only need domination with respect to the coalition $\set{1,2}$. For any $i\in\nat$, we have ${\bf{x}}^{i+1}\to_{\set{1,2}}{\bf{x}}^i$ because (1) the agents 1 and 2 do strictly better, i.e. $x^{i+1}_1>x^i_1$ and $x^{i+1}_2>x^i_2$, which in our case is
\begin{equation}
\frac{1}{4}-\frac{1}{2(i+2)} > \frac{1}{4}-\frac{1}{2(i+1)}\Leftrightarrow \frac{1}{i+2}<\frac{1}{i+1}\Leftrightarrow i+2 > i+1,
\end{equation}
which is true for all $i\in\nat$. Furthermore, (2) agents 1 and 2 earn enough payoff after defecting such that they can strictly better, because
\begin{equation}\label{eq:intermediate}
x_1^{i+1}+x_2^{i+1}=\frac{1}{4}-\frac{1}{2(i+2)} + \frac{1}{4}-\frac{1}{2(i+2)}\leq v^{(0,1)}\pair{\set{1,2}}=\frac{1}{2}.
\end{equation}
Equation \ref{eq:intermediate} is equivalent to
\begin{equation}
\frac{1}{2}-\frac{1}{i+2}\leq \frac{1}{2}\Leftrightarrow\frac{1}{i+2}\geq 0,
\end{equation}
which is true for all $i\in\nat$. Therefore, $\pair{\forall i\in\nat}{\bf{x}}^{i+1}\to{\bf{x}}^i$ for $\ang{IMP,\to}$ of the convex game where $a=b=c=\frac{1}{2}$. Therefore, the abstract game from our running example game seen as an AF is not well-founded.
\end{proof}

\noindent Therefore, not all convex games give rise to well-founded AFs.
%In a way, this result is not surprising because of the continuum nature of the simplex, the order-theoretic underpinnings of domination, and that AFs with one set of winning arguments do not have to be well-founded \cite[Example 1]{Dung:95}.
This result clarifies that the coincidence of solution concepts due to convexity and the coincidence of argumentation semantics due to well-foundedness are of different natures.% The counter-example of Theorem \ref{thm:convex_does_not_mean_well_founded} can be seen as an uncountably infinite version of Figure \ref{fig:nixon}.

\section{Discussion and Related Work}\label{sec:related_work}

In this paper, we have proved that for the uncountably infinite AFs that arise from cooperative games, the complete extensions (respectively the grounded extension) correspond to that game's subsolutions (respectively, supercore), by Theorem \ref{thm:complete_subsolutions} (respectively, Theorem \ref{thm:grounded_supercore}). This allows for more results from argumentation theory to be applied to cooperative game theory, for example, the lattice-theoretic structure of the complete extensions (Theorem \ref{thm:complete_semilattice}) or when the supercore is empty (Corollary \ref{cor:supercore_nonempty_balanced}). Both convex games and well-founded AFs result in a coincidence of, respectively, solution concepts and argumentation semantics, but convex games do not necessarily give rise to well-founded AFs (Theorem \ref{thm:convex_does_not_mean_well_founded}). To the best of our knowledge, these contributions are original.\footnote{The authors have checked all papers citing \cite{Roth:76}, which is the first paper defining the supercore and subsolutions from the abstract game of a cooperative game, and found no papers on argumentation theory among them.} These efforts strengthen the ``correctness'' of abstract argumentation by demonstrating its ability to reason about problems of societal or strategic concern.

Our first contribution completes the correspondence between Dung's original four argumentation semantics with solution concepts in cooperative games. We can use this correspondence in future work to investigate the relationship between further abstract argumentation semantics not mentioned in \cite{Dung:95} (e.g. those mentioned in \cite{Baumann:15}) and solution concepts in cooperative game theory. We can also investigate \textit{continuum} AFs more generally, where the set of arguments $A\cong\real$, as cooperative game theory provides a natural motivation for them.% - these AFs have their set of arguments forming an $(m-1)$-dimensional simplex, and the resulting argumentation semantics may have interesting topological or differentiable properties.% One can also ask the reverse question: do other solution concepts in cooperative game theory, specifically those that do not concern dominance such as the \textit{Shapley value} or \textit{Banzhaf index} \cite{Chalkiadakis:11} correspond to meaningful argumentation semantics?

Our second contribution can be developed further. Intuitively, convex games do not have to be well-founded due to the continuum nature of the simplex and the order-theoretic underpinnings of domination, and hence one may divide extra payoffs into smaller and smaller units. But then one might justifiably ask whether it still makes sense for the first two agents to be be sensitive to infinitesimal improvements in their payoffs when $i$ is very large, and thus still maintain their desire to defect. We can attempt to answer this in future work.

%Firstly the sequence in Theorem \ref{thm:convex_does_not_mean_well_founded} converges to $\pair{\frac{1}{4},\frac{1}{4},\frac{1}{2}}$, but that does not violate the existence of this infinite ``backwards'' dominance chain required for non-well-founded AFs.
%The lack of well-foundedness is not very surprising because the space of arguments is a continuum and one can always construct infinite sequences of such arguments linked by domination, especially when domination is based on component-wise order.
%However, one may criticise this conceptually: would the first two agents be sensitive to infinitesimal improvements in their payoff when $n$ is large? What if we change this requirement and somehow coarse-grain the agents' sensitivities to changes in payoff? Would this truncate the sequence and give well-foundedness?

There has been much work investigating the relationship between argumentation and game theory more generally. For example, Rahwan and Larson have used argumentation theory to analyse non-cooperative games \cite{Rahwan:09} and study mechanism design \cite{Rahwan:08}. Matt and Toni, and Baroni et al. have applied von Neumann's minimax theorem \cite{vNM:44} to measure argument strength \cite{Baroni:17,Matt:08}. Riveret et al. investigate a dialogical setting of argumentation by representing the dialogue in game-theoretic terms, allowing them to determine optimal strategies for the participants \cite{riveret:08}. Roth et al. articulate a prescriptive model of strategic dialogue by using concepts from game theory \cite{roth:07}. Our paper is distinct from these as it applies ideas from argumentation theory to investigate \textit{cooperative} games.

This paper builds on results from Dung's seminal paper \cite[Section 3.1]{Dung:95}. There have been works applying ideas from cooperative game theory to non-monotonic reasoning and argumentation, specifically the \textit{Shapley value} \cite{Shapley:53}. For example, Hunter and Konieczny have used the Shapley value to measure inconsistency of a knowledge base \cite{Hunter:06}. Bonzon et al. have used the Shapley value to measure the relevance of arguments in the context of multiagent debate \cite{Bonzon:14}. The Shapley value, as a solution concept, is concerned with measuring the payoff to each agent given their \textit{marginal contribution} in each coalition, averaged over all coalitions; we do not consider it here in this paper as we are concerned with solution concepts to do with \textit{defection} rather than marginal contributions. Future work can build on the correspondences in this paper by considering which further solution concepts from cooperative games may be relevant for argumentation.

\bibliographystyle{plain}
\bibliography{AI32019}

\begin{thebibliography}{10}

\bibitem{Baroni:17}
Pietro Baroni, Giulia Comini, Antonio Rago, and Francesca Toni.
\newblock Abstract {G}ames of {A}rgumentation {S}trategy and
  {G}ame-{T}heoretical {A}rgument {S}trength.
\newblock In {\em International Conference on Principles and Practice of
  Multi-Agent Systems}, pages 403--419. Springer, 2017.

\bibitem{Baumann:15}
Ringo Baumann and Christof Spanring.
\newblock Infinite {A}rgumentation {F}rameworks.
\newblock In {\em Advances in Knowledge Representation, Logic Programming, and
  Abstract Argumentation}, pages 281--295. Springer, 2015.

\bibitem{Bondareva:63}
Olga~N. Bondareva.
\newblock Some {A}pplications of {L}inear {P}rogramming {M}ethods to the
  {T}heory of {C}ooperative {G}ames.
\newblock {\em Problemy Kibernetiki}, 10:119--139, 1963.

\bibitem{Bonzon:14}
Elise Bonzon, Nicolas Maudet, and Stefano Moretti.
\newblock Coalitional games for abstract argumentation.
\newblock In {\em COMMA}, pages 161--172, 2014.

\bibitem{Caulier:09}
Jean-Fran{\c{c}}ois Caulier.
\newblock A note on the monotonicity and superadditivity of {TU} cooperative
  games.
\newblock 2009.

\bibitem{Chalkiadakis:11}
Georgios Chalkiadakis, Edith Elkind, and Michael Wooldridge.
\newblock Computational {A}spects of {C}ooperative {G}ame {T}heory.
\newblock {\em Synthesis Lectures on Artificial Intelligence and Machine
  Learning}, 5(6):1--168, 2011.

\bibitem{Dung:95}
Phan~Minh Dung.
\newblock On the {A}cceptability of {A}rguments and its {F}undamental {R}ole in
  {N}onmonotonic {R}easoning, {L}ogic {P}rogramming and $n$-{P}erson {G}ames.
\newblock {\em Artificial Intelligence}, 77:321--357, 1995.

\bibitem{Gillies:59}
Donald~B. Gillies.
\newblock Solutions to {G}eneral {N}on-{Z}ero-{S}um {G}ames.
\newblock {\em Contributions to the Theory of Games}, 4(40):47--85, 1959.

\bibitem{Hunter:06}
Anthony Hunter and S{\'e}bastien Konieczny.
\newblock Shapley {I}nconsistency {V}alues.
\newblock {\em KR}, 6:249--259, 2006.

\bibitem{Jech:03}
Thomas Jech.
\newblock {\em Set {T}heory, the {T}hird {M}illennium {E}dition, {R}evised and
  {E}xpanded}.
\newblock Springer Monographs in Mathematics. Springer, Berlin, 2003.

\bibitem{Lucas:67}
William~F. Lucas.
\newblock A {G}ame with {N}o {S}olution.
\newblock Technical report, RAND Corporation, Santa Monica, California, 1967.

\bibitem{Lucas:69}
William~F. Lucas.
\newblock The {P}roof that a {G}ame may not have a {S}olution.
\newblock {\em Transactions of the American Mathematical Society},
  137:219--229, 1969.

\bibitem{Lui:08}
John C.~S. Lui.
\newblock {CSC}6480: {A}dvanced {T}opics in {N}etwork {A}nalysis lecture 10 -
  {C}ooperative {G}ames (2).
\newblock {\em \url{www.cse.cuhk.edu.hk/~cslui/CSC6480/cooperative_game2.pdf}},
  2008.

\bibitem{MWG:95}
Andreu Mas-Colell, Michael~D. Whinston, and Jerry~R. Green.
\newblock {\em Microeconomic {T}heory}, volume~1.
\newblock Oxford university press New York, 1995.

\bibitem{Matt:08}
Paul-Amaury Matt and Francesca Toni.
\newblock A game-theoretic measure of argument strength for abstract
  argumentation.
\newblock In {\em European Workshop on Logics in Artificial Intelligence},
  pages 285--297. Springer, 2008.

\bibitem{Rahwan:08}
Iyad Rahwan and Kate Larson.
\newblock Mechanism {D}esign for {A}bstract {A}rgumentation.
\newblock In {\em Proceedings of the 7th international joint conference on
  Autonomous agents and multiagent systems-Volume 2}, pages 1031--1038.
  International Foundation for Autonomous Agents and Multiagent Systems, 2008.

\bibitem{Rahwan:09}
Iyad Rahwan and Kate Larson.
\newblock Argumentation and {G}ame theory.
\newblock In {\em Argumentation in Artificial Intelligence}, pages 321--339.
  Springer, 2009.

\bibitem{ArgAI}
Iyad Rahwan and Guillermo~R. Simari.
\newblock {\em Argumentation in {A}rtificial {I}ntelligence}, volume~47.
\newblock Springer, 2009.

\bibitem{riveret:08}
R{\'e}gis Riveret, Henry Prakken, Antonino Rotolo, and Giovanni Sartor.
\newblock Heuristics in argumentation: a game-theoretical investigation.
\newblock In {\em Proceedings of the 2nd international conference on
  computational models of argument}, pages 324--335. IOS Press, 2008.

\bibitem{Roth:75}
Alvin~E. Roth.
\newblock A {L}attice {F}ixed-{P}oint {T}heorem with {C}onstraints.
\newblock {\em Bulletin of the American Mathematical Society}, 81(1):136--138,
  1975.

\bibitem{Roth:76}
Alvin~E. Roth.
\newblock Subsolutions and the {S}upercore of {C}ooperative {G}ames.
\newblock {\em Mathematics of Operations Research}, 1(1):43--49, 1976.

\bibitem{roth:07}
Bram Roth, R{\'e}gis Riveret, Antonino Rotolo, and Guido Governatori.
\newblock Strategic argumentation: a game theoretical investigation.
\newblock In {\em Proceedings of the 11th international conference on
  Artificial intelligence and law}, pages 81--90. ACM, 2007.

\bibitem{Shapley:53}
Lloyd~S Shapley.
\newblock A {V}alue for $n$-{P}erson {G}ames.
\newblock {\em Contributions to the {T}heory of {G}ames}, 2(28):307--317, 1953.

\bibitem{Shapley:67}
Lloyd~S. Shapley.
\newblock On {B}alanced {S}ets and {C}ores.
\newblock {\em Naval Research Logistics Quarterly}, 14(4):453--460, 1967.

\bibitem{Shapley:71}
Lloyd~S. Shapley.
\newblock Cores of {C}onvex {G}ames.
\newblock {\em International {J}ournal of {G}ame {T}heory}, 1(1):11--26, 1971.

\bibitem{Thomas:12}
Lyn~Carey Thomas.
\newblock {\em Games, {T}heory and {A}pplications}.
\newblock Courier Corporation, 2012.

\bibitem{vNM:44}
John {V}on {N}eumann and Oskar Morgenstern.
\newblock {\em Theory of {G}ames and {E}conomic {B}ehavior}.
\newblock Princeton university press, 1944.

\bibitem{Young:18}
Anthony~P. Young.
\newblock Notes on {A}bstract {A}rgumentation {T}heory.
\newblock {\em ArXiv preprint arXiv:1806.07709}, 2018.

\end{thebibliography}

\end{document}